\title{A Waste-Efficient Algorithm for Single-Droplet Sample Preparation on Microfluidic Chips}
\titlerunning{Mixing Graphs}
\author{Miguel Coviello Gonzalez}{Department of Computer Science\\ University of California at Riverside}{}{}{}
\author{Marek Chrobak$\star$}{Department of Computer Science\\ University of California at Riverside}{}{}{}
\authorrunning{M. Coviello Gonzalez and M. Chrobak}
\subjclass{%
	\ccsdesc[500]{Discrete Mathematics~Combinatorial Optimization $\bullet$}
	\ccsdesc[500]{Theory of Computation}
}
\keywords{algorithms, graph theory, lab-on-chip, fluid mixing}
\newcommand{\mareksmargincomment}[1]%
    {{%
      \marginpar{{\tiny\begin{minipage}{0.5in}
                       \begin{flushleft}
                          {\color{red}MCh} {#1}
                       \end{flushleft}
                       \end{minipage}
                }}
    }}
\newcommand{\etal}{{\emph{et~al.}}}
\newcommand{\braced}[1]{{ \left\{ #1 \right\} }}
\newcommand{\ceiling}[1]{\lceil#1\rceil}
\newcommand{\calB}{{\cal B}}
\newcommand{\tildei}{{\tilde{\imath}}}
\newcommand{\tildej}{{\tilde{\jmath}}}
\newcommand{\tildephi}{{\tilde{\phi}}} 
 \newcommand{\tildepsi}{{\tilde{\psi}}}
\newcommand{\myparagraph}[1]{{\medskip\noindent\textbf{#1}.}}
\newcommand{\emparagraph}[1]{{\smallskip\noindent\textit{#1}.}} 
\newcommand{\mycase}[1]{{\underline{Case~#1}:}}
\newcommand{\integers}{{\mathbb Z}}
\newcommand{\nonnegintegers}{{\mathbb{Z}_{\ge 0}}}
\newcommand{\binrep}{{\textsf{bin}}}
\newcommand{\nofdroplets}[1]{{|#1|}}
\newcommand{\precision}{{\textsf{prec}}}
\newcommand{\initJ}{J_{{\scriptscriptstyle\textrm{init}}}}
\newcommand{\MinMix}{\texttt{Min-Mix}}
\newcommand{\DMRW}{\texttt{DMRW}}
\newcommand{\REMIA}{\texttt{REMIA}}
\newcommand{\GORMA}{\texttt{GORMA}}
\newcommand{\WARA}{\texttt{WARA}}
\newcommand{\ILP}{\texttt{ILP}}
\newcommand{\RPRIS}{\texttt{RPRIS}}
\newcommand{\MATLAB}{\texttt{MATLAB}}
\newcommand{\half}{{\textstyle\frac{1}{2}}}
\newcommand{\onehalf}{{\textstyle\frac{1}{2}}}
\newcommand{\onefourth}{{\textstyle\frac{1}{4}}}
\newcommand{\threefourths}{{\textstyle\frac{3}{4}}}
\newcommand{\oneeighth}{{\textstyle\frac{1}{8}}}
\newcommand{\threeeighths}{{\textstyle\frac{3}{8}}}
\newcommand{\fiveeighths}{{\textstyle\frac{5}{8}}}
\newcommand{\seveneighths}{{\textstyle\frac{7}{8}}}
\newcommand{\onesixteenth}{{\textstyle\frac{1}{16}}}
\newcommand{\fivesixteenths}{{\textstyle\frac{5}{16}}}
\newcommand{\ninesixteenths}{{\textstyle\frac{9}{16}}}
\newcommand{\elevensixteenths}{{\textstyle\frac{11}{16}}}
\newcommand{\thirteensixteenths}{{\textstyle\frac{13}{16}}}
\newtheorem{observation}{Observation}
\newtheorem{claim}{Claim}
\begin{document}
	
\maketitle

\begin{abstract} 
	We address the problem of designing micro-fluidic chips for
	sample preparation, which is a crucial step in many experimental processes in chemical and biological sciences.
	One of the objectives of sample preparation is to dilute the sample fluid, called reactant, 
	using another fluid called buffer, to produce desired volumes of
	fluid with prespecified reactant concentrations. In the model we adopt,
	these fluids are manipulated in discrete volumes called droplets.
	The dilution process is represented by a \emph{mixing graph}
	whose nodes represent 1-1 micro-mixers and edges represent channels for transporting fluids.
	In this work we focus on designing such mixing graphs when the given sample (also referred to 
	as the \emph{target}) consists of a single-droplet, and the objective is to minimize total fluid waste.
	Our main contribution is an efficient algorithm called $\RPRIS$ that
	guarantees a better provable worst-case bound on waste and 
	significantly outperforms state-of-the-art algorithms in experimental comparison. 
\end{abstract}


\section{Introduction}
\label{sec: introduction}



Microfluidic chips are miniature devices that can manipulate tiny amounts of fluids on a small chip 
and can perform, automatically, various laboratory functions such as dispensing, mixing, filtering and detection. 
They play an increasingly important role in today's science and
technology, with applications in environmental or medical monitoring, protein or DNA analysis, 
drug discovery, physiological sample analysis, and cancer research.

These chips often contain modules whose function is to mix fluids. One application where 
fluid mixing plays a  crucial role is sample preparation for some biological or chemical experiments.
When preparing such samples, one of the objectives is to produce desired volumes of the fluid of interest, 
called \emph{reactant}, diluted to some specified concentrations by
mixing it with another fluid called \emph{buffer}. As an example, an experimental study may require a 
sample that consists of $6 \mu L$ of reactant with concentration
$10\%$,  $9 \mu L$ of reactant with concentration $20\%$, and $3 \mu L$ of reactant with 
concentration $40\%$. Such multiple-concentration samples are often required in
toxicology or pharmaceutical studies, among other applications.

There are different models for fluid mixing in the literature and multiple 
technologies for manufacturing fluid-mixing microfluidic chips.  
(See the survey in~\cite{bhattacharya2014algorithmic} or the recent book~\cite{bhattacharya_book_2019algorithms}
for more information on different models and algorithmic issues related to fluid mixing.)
In this work we assume the \emph{droplet-based} model, where the fluids are manipulated
in discrete quantities called \emph{droplets}. For convenience, we will
identify droplets by their reactant concentrations, 
which are numbers in the interval $[0,1]$ with finite binary precision. In particular, a
droplet of reactant is denoted by $1$ and a droplet of buffer by $0$.
We focus on the mixing technology that utilizes modules
called \emph{1-1 micro-mixers}. A micro-mixer has two inlets and two outlets. 
It receives two droplets of fluid, one
in each inlet, mixes these droplets perfectly, and produces two droplets
of the mixed fluid, one on each outlet. (Thus,
if the inlet droplets have reactant concentrations $a$ and $b$,
then the two outlet droplets each will have concentration $\half(a+b)$.)
Input droplets are injected into the chip via droplet dispensers and output droplets are
collected in droplet collectors. All these components are
connected via micro-channels that transport droplets, forming naturally an acyclic graph that we call
a \emph{mixing graph}, whose source nodes are fluid dispensers, 
internal nodes (of in-degree and out-degree $2$)
are micro-mixers, and sink nodes are droplet collectors.
Graph $G_1$ in Figure~\ref{fig: mixing graph example} illustrates an example of a mixing graph.

\begin{figure}[ht]
	\begin{center}
		\includegraphics[width = 3.5in]{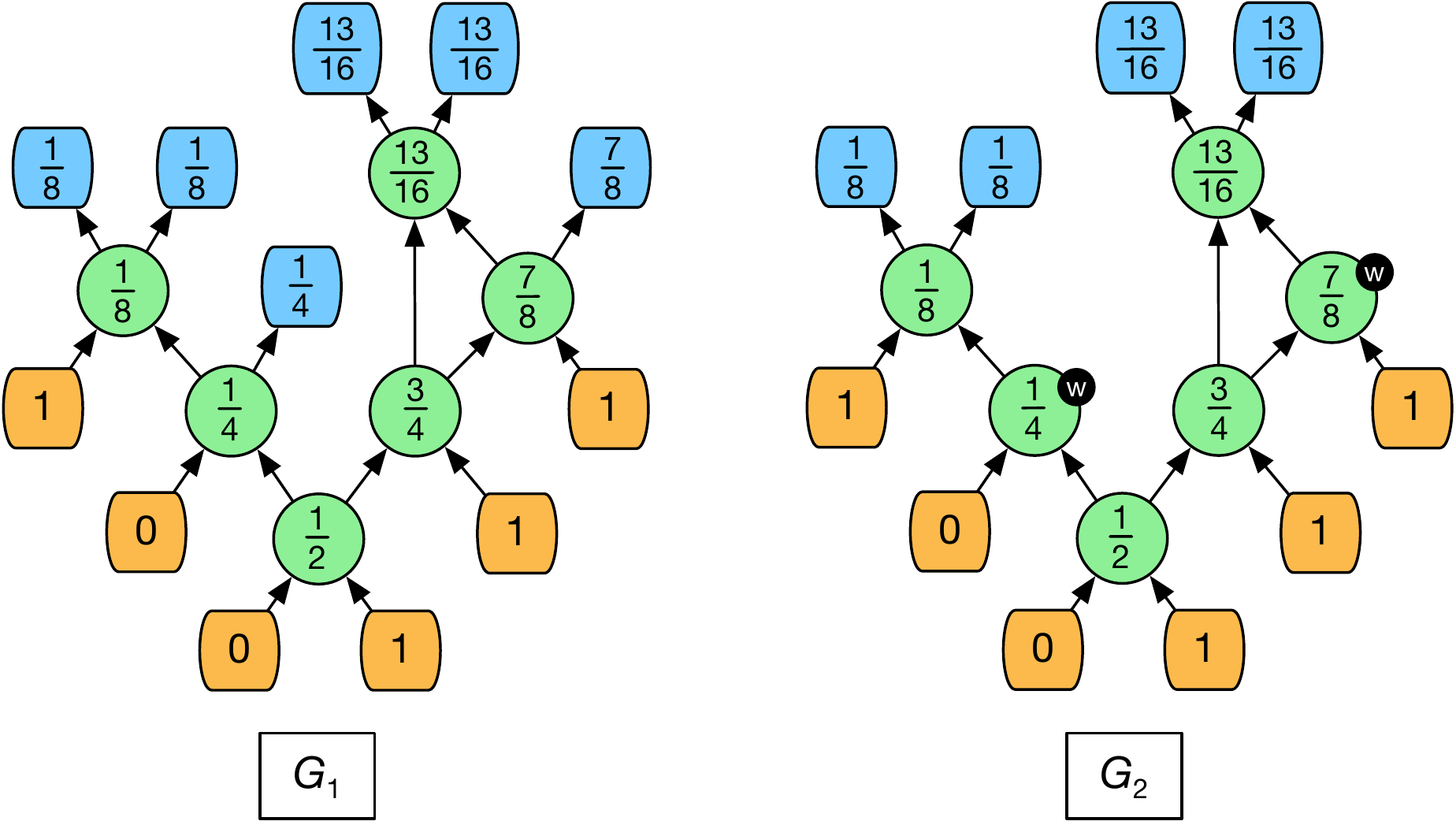}
		\caption{On the left, a mixing graph $G_1$ that produces droplet set
			$\braced{\oneeighth,\oneeighth,\onefourth,\thirteensixteenths,\thirteensixteenths,\seveneighths}$
			from input set $I = \braced{0,0,1,1,1,1}$.    
			Numbers on the micro-mixers (internal nodes) represent droplet concentrations produced by these micro-mixers. 
			If only some of the produced droplets are needed, the remaining droplets are
			designated as waste. 
			This is illustrated by the mixing graph $G_2$ that produces droplets
			$\braced{\oneeighth,\oneeighth,\thirteensixteenths,\thirteensixteenths}$.
			Small black circles labeled ``w'' on micro-mixers represent droplets of waste.}
		\label{fig: mixing graph example}
	\end{center}
\end{figure}

Given some target set of droplets with specified reactant concentrations, the objective is to design
a mixing graph that produces these droplets from pure reactant and buffer droplets, while
optimizing some objective function. Some target sets can be produced only if we allow
the mixing graph to also produce some superfluous amount of fluid that we refer to as
\emph{waste}; see graph $G_2$ in Figure~\ref{fig: mixing graph example}. 
One natural objective function is to minimize the number of waste droplets
(or equivalently, the total number of input droplets). As reactant is typically more
expensive than buffer, one other common objective is to minimize the reactant usage. 
Yet another possibility is to minimize the number of micro-mixers 
or the depth of the mixing graph. 
There is growing literature on developing techniques and algorithms for
designing such mixing graphs that attempt to optimize some of the above criteria. 


\myparagraph{State-of-the-art} 
Most of the earlier papers on this topic studied designing mixing graphs for single-droplet targets.
This line of research was pioneered by Thies~{\etal}~\cite{thies2008abstraction}, 
who proposed an algorithm  called $\MinMix$ that constructs a mixing graph for a single-droplet target
with the minimum number of mixing operations.
Roy~{\etal}~\cite{roy2010optimization} developed an algorithm called $\DMRW$ designed to minimize waste.
Huang~{\etal}~\cite{huang2012reactant} considered minimizing
reactant usage, and proposed an algorithm called~{\REMIA}. Another algorithm called~{\GORMA}, 
for minimizing reactant usage and based on a branch-and-bound technique, 
was developed by Chiang~{\etal}~\cite{chiang2013graph}.

The algorithms listed above are heuristics, with no formal performance guarantees.  
An interesting attempt to develop an algorithm that minimizes waste, for target sets with multiple droplets, was 
reported by Dinh~{\etal}~\cite{dinh2014network}.
Their algorithm, that we refer to as $\ILP$, is based on a reduction to integer linear programming and, 
since their integer program could be exponential in the precision $d$ of the target set
(and thus also in terms of the input size), 
its worst-case running time is doubly exponential. 
Further, as this algorithm only considers mixing graphs of depth at most $d$, 
it does not always finds an optimal solution (see an example in~\cite{gonzalez2019towards}). 
In spite of these deficiencies, for very small values of $d$ it is still likely to produce 
good mixing graphs. 

Additional work regarding the design of mixing graphs for multiple droplets includes
Huang~{\etal}'s algorithm called $\WARA$, which is an extension of Algorithm~$\REMIA$,
that focuses on reactant minimization; see \cite{huang2013reactant}.
Mitra~{\etal}~\cite{mitra2012chip} also proposed an algorithm for multiple droplet concentrations
by modeling the problem as an instance of the Asymmetric TSP on a de Bruijn graph.

As discussed in~\cite{gonzalez2019towards}, the computational complexity of
computing mixing graphs with minimum waste is still open, even in the case of
single-droplet targets. In fact, it is not even known whether the minimum-waste function is
computable at all, or whether it is decidable to determine if a given target set can
be produced without \emph{any} waste. 
To our knowledge, the only known result that addresses theoretical aspects of
designing mixing graphs is a polynomial-time algorithm in~\cite{gonzalez2019towards} that determines whether
a given collection of droplets with specified concentrations can be mixed perfectly with a mixing graph.


\myparagraph{Our results}
Continuing the line of work in~\cite{thies2008abstraction,roy2010optimization,huang2012reactant,chiang2013graph},
we develop a new efficient algorithm $\RPRIS$ (for \emph{Recursive Precision Reduction with Initial Shift})
for designing mixing graphs for single-droplet targets, with the objective to minimize waste. 
Our algorithm was designed to provide improved worst-case waste estimate; specifically to cut it by
half for most concentrations. Its main idea is quite natural: recursively, 
at each step it reduces the precision of the target droplet by $2$, while only adding one
waste droplet when adjusting the mixing graph during backtracking.        

While designed with worst-case performance in mind, $\RPRIS$ significantly outperforms algorithms
$\MinMix$, $\DMRW$ and $\GORMA$ in our experimental study,  producing on average about $50\%$ less waste than $\MinMix$,
between $21$ and $25\%$ less waste than $\DMRW$ (with the percentage increasing with 
the precision $d$ of the target droplet), and about $17\%$ less waste than $\GORMA$.
(It also produces about $40\%$ less waste than $\REMIA$.)    
Additionally, when compared to $\ILP$, $\RPRIS$ produces on average only about $7\%$
additional waste.

Unlike earlier work in this area, that was strictly experimental,
we introduce a performance measure for waste minimization algorithms and show that
$\RPRIS$ has better worst-case performance than $\MinMix$ and $\DMRW$. 
This measure is based on two attributes $d$ and $\gamma$ of the target concentration $t$.
As defined earlier, $d$ is the precision of $t$, 
and $\gamma$ is defined as the number of equal leading bits in $t$'s
binary representation, not including the least-significant bit $1$.
For example, if   $t =.00001011$ then  $\gamma = 4$, and if
$t = .1111$ then $\gamma = 3$. (Both $d$ and $\gamma$ are functions of $t$,
but we skip the argument $t$, as it is always understood from context.) 
In the discussion below we provide more intuition and motivations for using
these parameters.

We show that Algorithm $\RPRIS$ produces at most $\half(d+\gamma)+2$ droplets of waste
(see Theorem~\ref{thm: waste bound} in Section~\ref{subsec: performance bounds}).
In comparison, Algorithm~$\MinMix$ from~\cite{thies2008abstraction} produces exactly $d$ droplets of waste
to produce $t$, independently of the value of $t$.
This means that the waste of  $\RPRIS$ is about half that of~$\MinMix$ for almost all concentrations $t$.
(More formally, for a uniformly chosen random $t$ with precision $d$
the probability that the waste is larger than $(\half-\epsilon)d$ vanishes when $d$ grows, for any $\epsilon>0$.) 
As for Algorithm~$\DMRW$, its average waste is better than that of $\MinMix$, but its
worst-case bound is still $d-O(1)$ even 
for small values of $\gamma$ (say, when $t\in [\onefourth, \threefourths]$),
while Algorithm~$\RPRIS$' waste is at most $d/2+O(1)$ in this range.

In regard to time performance, for the problem of computing mixing graphs it would be reasonable
to express the time complexity of an algorithm as a function of its output, which is the
size of the produced graph. This is because the output size is at least as large
as the input size, which is equal to $d$ -- the number of bits of $t$. (In fact, typically it's much larger.)
Algorithm~$\RPRIS$ runs in time that is linear in
the size of the computed graph, and the graphs computed by Algorithm~$\RPRIS$ have size $O(d^2)$.


\myparagraph{Discussion}       
To understand better our performance measure for waste, observe
that the optimum waste is never smaller than $\gamma+1$. This is because if
the binary representation of $t$ starts with $\gamma$ $0$'s then any
mixing graph has to use $\gamma+1$ input droplets $0$ and at least one droplet $1$.
(The case when the leading bits of $t$ are $1$'s is symmetric.)
For this reasons, a natural approach is to express the waste in the form
$\gamma + f(d-\gamma)$, for some function $f()$. 
In Algorithm~$\RPRIS$ we have $f(x) \approx \half x$.
It is not known whether smaller functions $f()$ can be achieved.

Ideally, one would like to develop efficient ``approximation'' algorithms for waste
minimization, that measure waste performance in terms
of the additive or multiplicative approximation error, with respect to the optimum
value. This is not realistic, however, given the current state of knowledge, since
currently no close and computable bounds for the optimum waste are known.


\section{Preliminaries}
\label{sec: preliminaries}



We use notation $\precision(c)$ for the precision of concentration $c$,
that is the number of fractional bits in the binary representation of $c$.
(All concentration values will have finite binary representation.)
In other words, $\precision(c)= d\in\nonnegintegers$ such that
$c = a/2^d$ for an odd $a\in\integers$. 

We will deal with sets of droplets, some possibly with equal concentrations.
We define a \emph{configuration} as a multiset of droplet concentrations.
Let $A$ be an arbitrary configuration.
By $\nofdroplets{A}=n$ we denote the number of droplets in $A$.
We will often write a configuration as
$A = \braced{f_1:a_1,f_2: a_2,...,f_m:a_m}$, where each $a_i$ represents
a different concentration and $f_i$ denotes the multiplicity of $a_i$ in $A$. 
(If $f_i=1$, then, we will just write ``$a_i$'' instead of ``$f_i:a_i$''.)
Naturally, we have $\sum_{i=1}^m f_i = n$. 

We defined mixing graphs in the introduction. A mixing graph can be thought of,
abstractly, as a linear mapping from the source values (usually $0$'s and $1$'s) to the sink
values. Yet in the paper, for convenience, we will assume that the source
concentration vector
is part of a mixing graph's specification, and that all sources, micro-mixers,
and sinks are labeled by their associated concentration values.

We now define an operation of graph coupling. Consider two mixing graphs $G_1$ and $G_2$. 
Let $T_1$ be the output configuration (the concentration labels of the sink nodes) of $G_1$ and
$I_2$ be the input configuration (the concentration labels of the source nodes) for $G_2$.
To construct the \emph{coupling} of $G_1$ and $G_2$, denoted $G_2\bullet G_1$,
we identify inlet edges of the sinks of $G_1$ with labels from $T_1\cap I_2$ with
outlet edges of the corresponding sources in $G_2$. 
More precisely, repeat the following steps as long as $T_1\cap I_2 \neq\emptyset$: 
(1) choose any $a\in T_1\cap I_2$, 
(2) choose any sink node $t_1$ of $G_1$ labeled $a$, and let $(u_1,t_1)$ be its inlet edge,
(3) choose any source node $s_2$ of $G_2$ labeled $a$, and let $(s_2,v_2)$ be its outlet edge,
(4) remove $t_1$ and $s_2$ and their incident edges, and finally,
(5) add edge $(u_1,v_2)$.
The remaining sources of $G_1$ and $G_2$ become sources of $G_2\bullet G_1$, and
the remaining sinks of $G_1$ and $G_2$ become sinks of $G_2\bullet G_1$.
See Figure~\ref{fig: mixing graph composition example} for an example.

\begin{figure}[ht]
	\begin{center}
		\includegraphics[width = 3.4in]{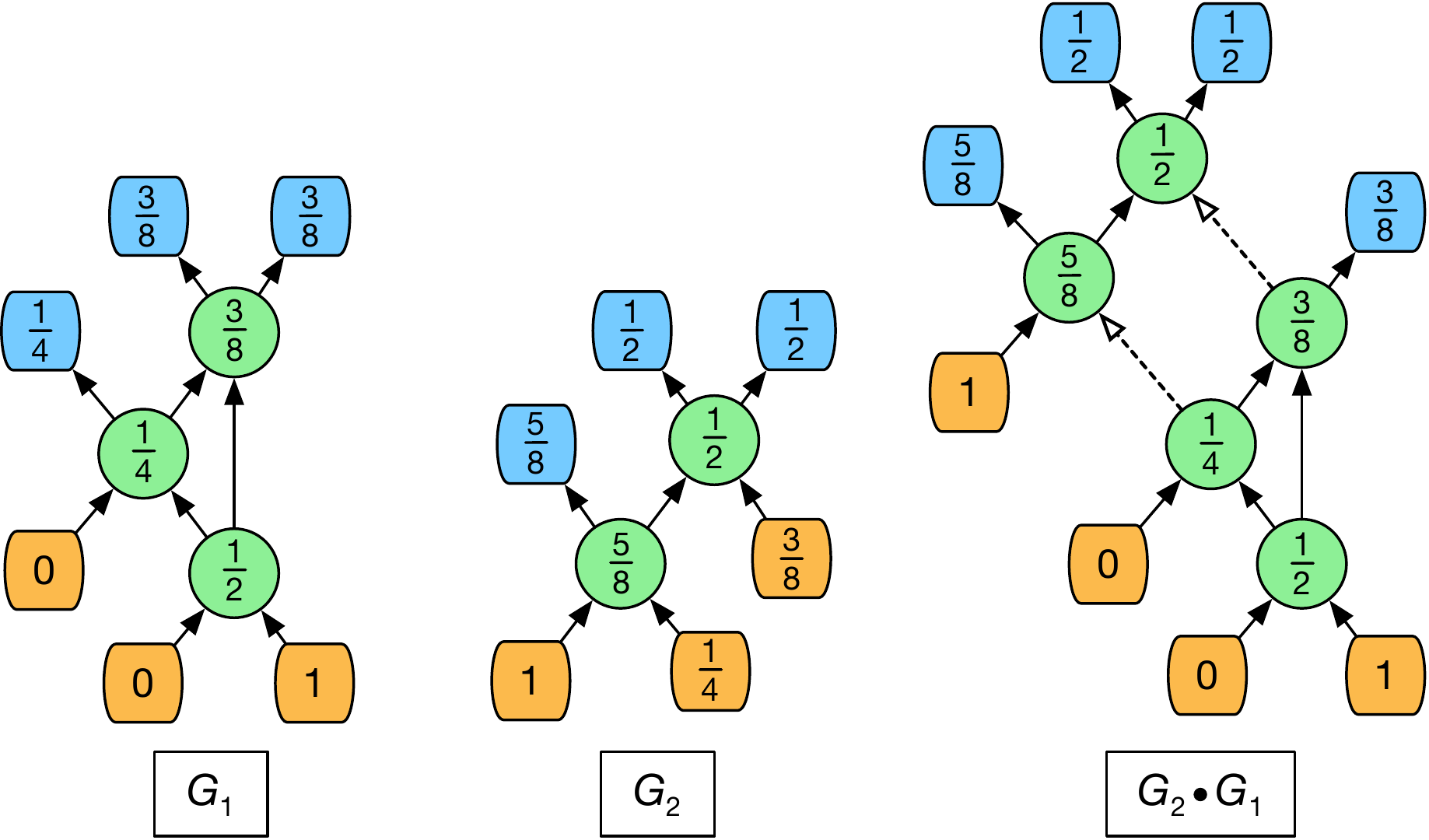}
		\caption{Coupling of two mixing graphs $G_1$ and $G_2$. $G_2\bullet G_1$ is obtained by
		identifying inlet edges of two sinks of $G_1$, one labelled $\onefourth$ and one $\threeeighths$,
		with the outlet edges of the corresponding sources of $G_2$.
		These new edges are shown as dotted arrows.}
		\label{fig: mixing graph composition example}
	\end{center}
\end{figure}

Next, we define converter graphs.
An \emph{$(i:\alpha,j:\beta)$-converter} is a mixing graph
that produces a configuration of the form $T=\braced{i:\alpha,j:\beta}\cup W$,
where $W$ denotes a set of waste droplets,
and whose input droplets have concentration labels either $0$ or $1$.
As an example, graph $G_2$ in Figure~\ref{fig: mixing graph example} can
be interpreted as a $(2:\oneeighth,2:\thirteensixteenths)$-converter that produces two waste droplets
of concentrations $\onefourth$ and $\seveneighths$.

If needed, to avoid clutter, sometimes we will use a more compact graphical representation of mixing graphs 
by aggregating (not necessarily all) nodes with the same concentration labels into a single node, 
and with edges labeled by the number of droplets that flow through them.
(We will never aggregate two micro-mixer nodes if they both produce a droplet of waste.)
If the label of an edge is $1$, then we will simply omit the label.
See Figure~\ref{fig: mixing graph simplification} for an example of such a compact representation.

\begin{figure}[ht]
	\begin{center}
		\includegraphics[width = 3.2in]{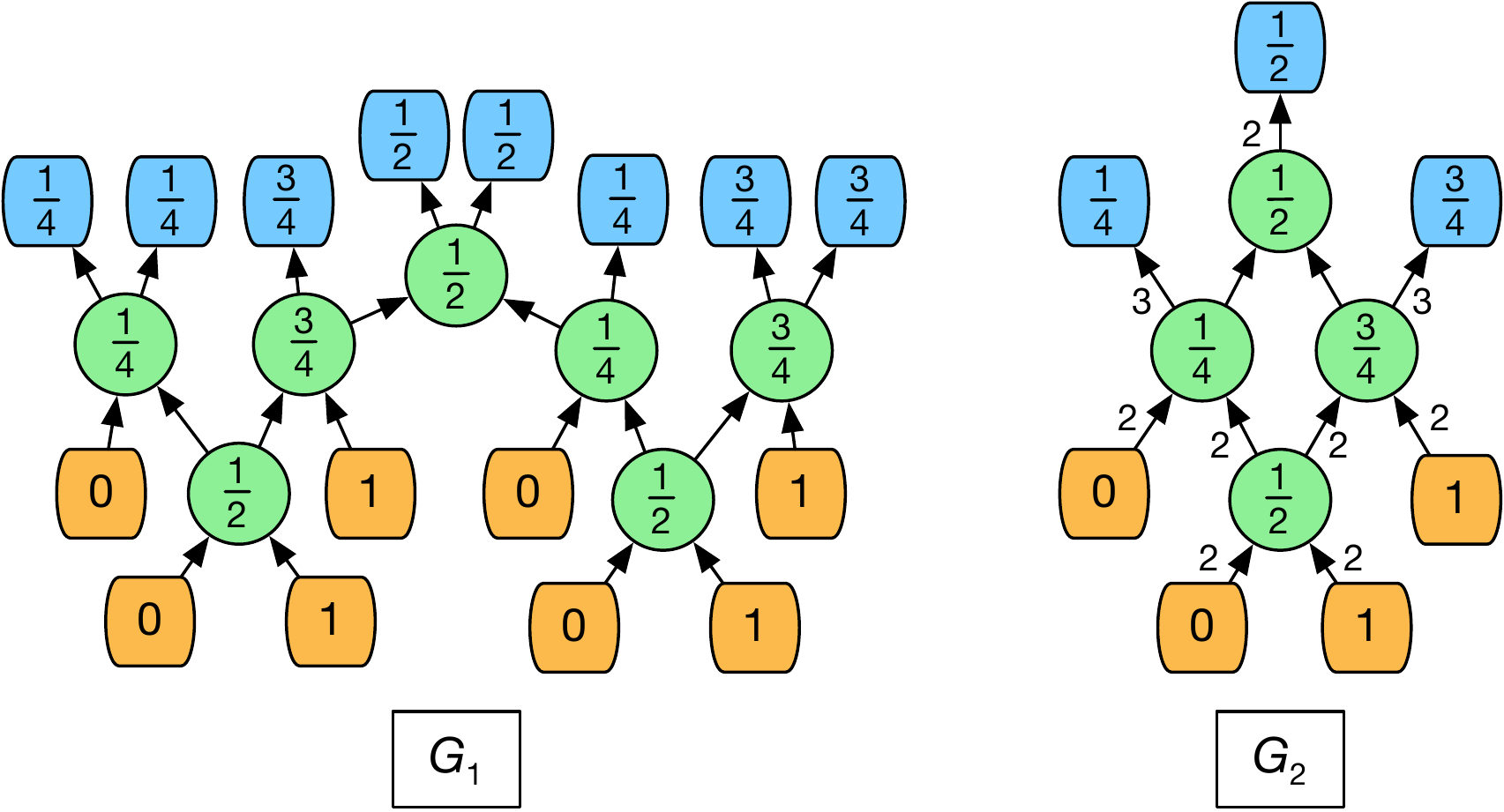}
		\caption{$G_2$ is a compact representation of $G_1$. 
			All nodes in $G_2$ (except the last intermediate node with label $\half$) represent 
			an aggregation of at least two nodes from $G_1$.}
			\label{fig: mixing graph simplification}
		\end{center}
	\end{figure}


\section{Algorithm Description}
\label{sec: algorithm description}



In this section, we describe our algorithm $\RPRIS$ for producing a single-droplet target
of concentration $t$ with precision $d=\precision(t)$.
We first give the overall strategy and then we gradually explain its implementation.
The core idea behind $\RPRIS$ is a recursive procedure that we refer to as
\emph{Recursive Precision Reduction}, that we outline first. 
In this procedure, $t_s$ denotes the concentration computed at the $s^{th}$ recursive step with $d_s=\precision(t_s)$;
initially, $t_0 = t$. Also, by $\calB$ we denote the set of base concentration values with small
precision for which we give explicit mixing graphs later in this section.
\begin{description}
	\setlength{\itemsep}{0.01in}
	\item{\textbf{Procedure} RPR$(t_s)$}
	
		\textbf{If} $t_s\in \calB$, let $G_s$ be the base mixing graph (defined later) for $t_s$, \textbf{else}:   
		\begin{description}
			\item{(rpr1)} Replace $t_s$ by another concentration value $t_{s+1}$ with $d_{s+1} = d_s-2$.
			\item{(rpr2)} Recursively construct a mixing graph $G_{s+1}$ for $t_{s+1}$.
			\item{(rpr3)} Convert $G_{s+1}$ into a mixing graph $G_s$ for $t_s$, 
				increasing waste by one droplet. 
		\end{description}  
		\textbf{Return} $G_s$.
\end{description}	
The mixing graph produced by this process is $G_0$.

When we convert $G_{s+1}$ into $G_s$ in part~(rpr3), the precision of the target increases by $2$, 
but the waste only increases by $1$, which gives us a rough bound of $d/2$ on the overall waste.
However, the above process does not work for all concentration values; 
it only works when $t_0 \in [\onefourth , \threefourths]$. 
To deal with values outside this interval, we map $t$ into $t_0$
so that $t_0 \in [\onefourth , \threefourths]$,
next we apply Recursive Precision Reduction to $t_0$, and then we appropriately modify
the computed mixing graph. This process is called \emph{Initial Shift}.

We next describe these two processes in more detail, starting with 
Recursive Precision Reduction, followed by Initial Shift.


\myparagraph{Recursive Precision Reduction (RPR)}  
We start with concentration $t_0$ that, by applying Initial Shift (described next), 
we can assume to be in $[\onefourth , \threefourths]$.


\emparagraph{Step (rpr1): computing $t_{s+1}$}
We convert $t_s$ into a carefully chosen concentration $t_{s+1}$ for which  $d_{s+1} = d_s - 2$.
One key idea is to maintain an invariant so that at each recursive step, this new
concentration value $t_{s+1}$ satisfies $t_{s+1}\in [\onefourth,\threefourths]$.  To accomplish
this, we consider five intervals $S_1=[\oneeighth,\threeeighths]$, 
$S_2=[\onefourth,\onehalf]$, $S_3=[\threeeighths,\fiveeighths]$,
$S_4=[\half,\threefourths]$, and $S_5=[\fiveeighths,\seveneighths]$.
We choose an interval $S_k$ that contains $t_s$ ``in the middle'', that is
$S_k=[l,r]$ for $k$ such that $t_s\in [l+\onesixteenth,r-\onesixteenth]$. 
(See  Figure~\ref{fig: intervals graphical representation}.)
We then compute $t_{s+1}=4(t_s-l)$. Note that $t_{s+1}$ satisfies both 
$t_{s+1}\in [\onefourth,\threefourths]$ (that is, our invariant) and $d_{s+1} = d_s - 2$.

\begin{figure}[ht]
	\begin{center}
		\includegraphics[width = 3.2in]{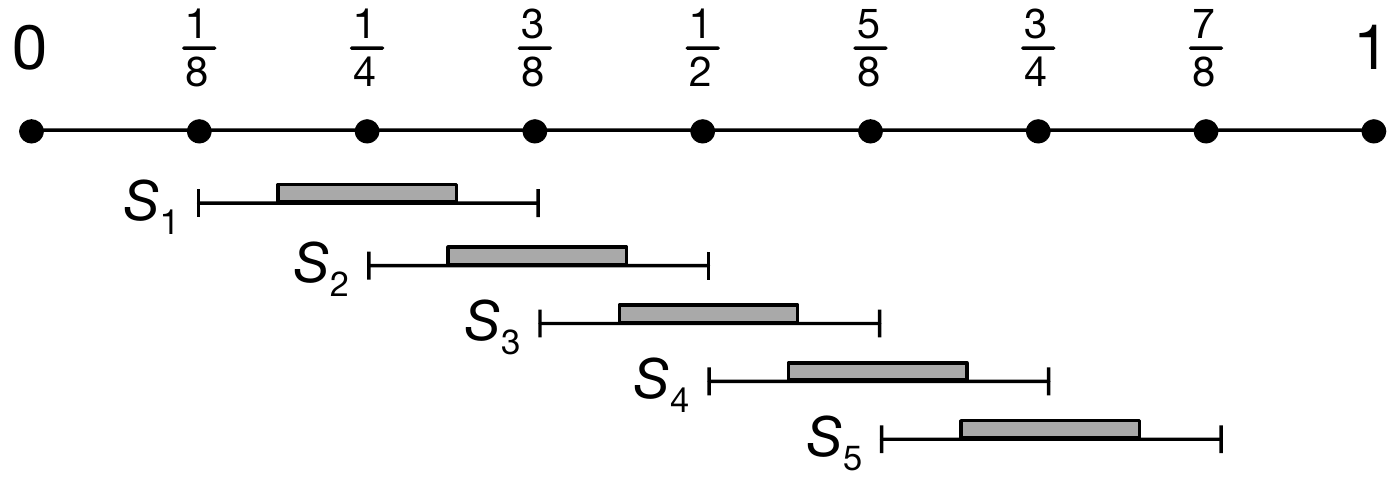}
		\caption{Graphical representation of intervals $S_1,S_2,\dots,S_5$.
		The thick shaded part of each interval $S_k = [l,r]$ marks its ``middle section'' $[l+\onesixteenth,r-\onesixteenth]$.
		Each concentration within interval $[\onefourth,\threefourths]$ belongs to the middle section of some $S_k$.}
		\label{fig: intervals graphical representation}
	\end{center}
\end{figure}
%


\emparagraph{Step (rpr3): converting $G_{s+1}$ into $G_s$}
Let $G_{s+1}$ be the mixing graph obtained for $t_{s+1}$ in step (rpr2). 
We first modify $G_{s+1}$ to obtain a graph $G_{s+1}'$, which is then coupled with an appropriate converter 
$C_{s+1}$ to obtain mixing graph $G_s = G_{s+1}'\bullet C_{s+1}$.
Figure~\ref{fig: recursive mapping} illustrates this process.

\begin{figure}[ht]
	\begin{center}
		\includegraphics[width = 4.2in]{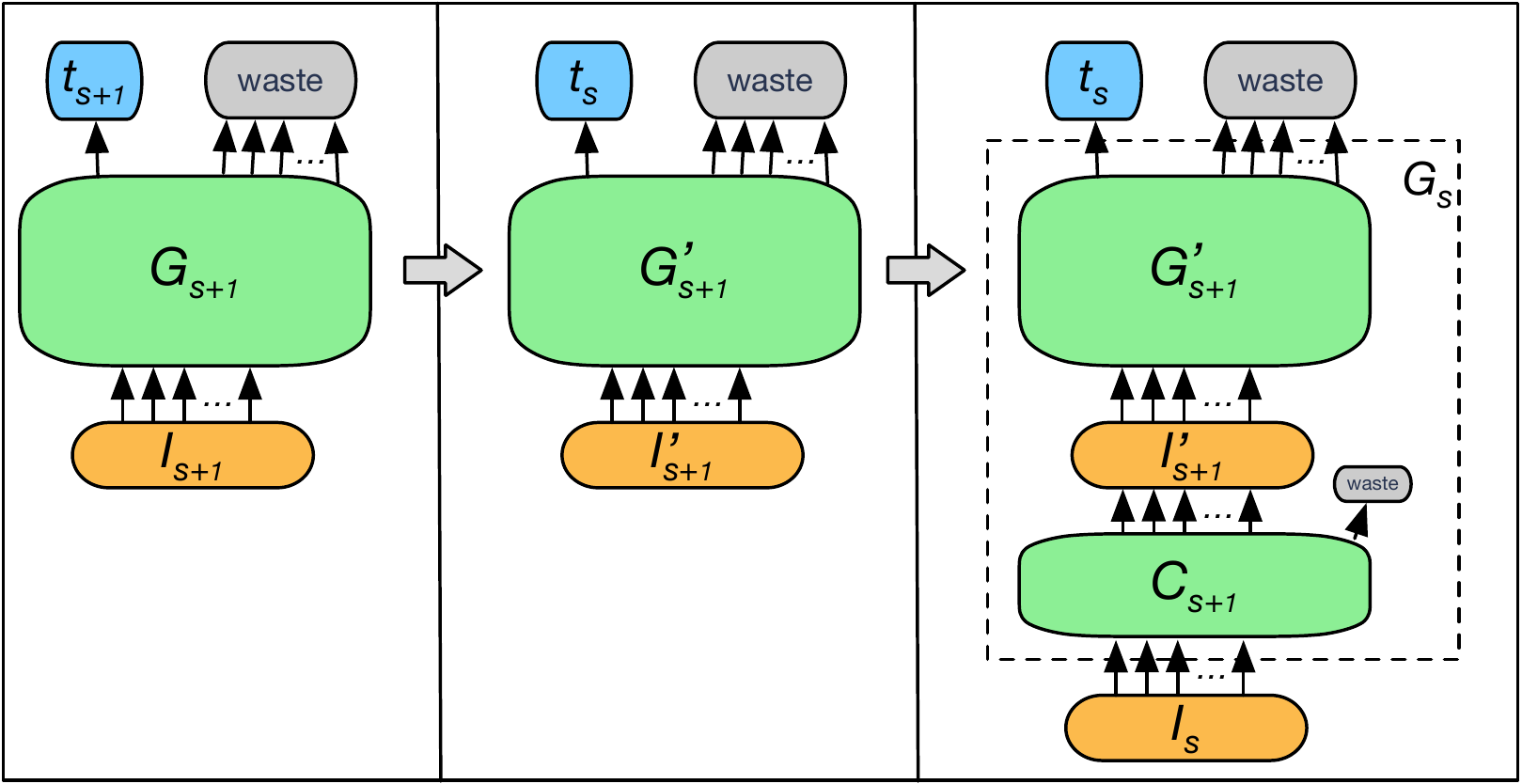}
		\caption{Conversion from $G_{s+1}$ to $G_s$. The left image illustrates the computed
			mixing graph $G_{s+1}$ with input labels $I_{s+1}$ (consisting of only $0$'s and $1$'s)
			 that produces $t_{s+1}$ along with some waste.
			The middle figure illustrates $G_{s+1}'$, which is obtained from $G_{s+1}$
			by changing concentration labels.
			The last figure illustrates the complete mixing graph $G_s = G_{s+1}'\bullet C_{s+1}$ for $t_s$, 
			shown within a dotted rectangle.}
		\label{fig: recursive mapping}
	\end{center}
\end{figure}

Next, we explain how to construct $G_{s+1}'$.
$G_{s+1}'$ consists of the same nodes and edges as $G_{s+1}$, only the concentration
labels are changed.
Specifically, every concentration label $c$ from $G_{s+1}$ is changed to $l+c/4$ in $G_{s+1}'$. 
Note that this is simply the inverse of the linear function that maps $t_s$ to $t_{s+1}$.
In particular, this will map the $0$- and $1$-labels of the
source nodes in $G_{s+1}$ to the endpoints $l$ and $r$ of the corresponding interval $S_k$.

The converter $C_{s+1}$ used in $G_s$ needs to have sink nodes with labels equal to the source nodes for $G_{s+1}'$. 
That is, if the labeling of the source nodes of  $G_{s+1}'$ is
$I_{s+1}'=\braced{i:l,j:r}$, then $C_{s+1}$ will be an $(i:l,j:r)$-converter. 
As a general rule, $C_{s+1}$ should produce at most one waste droplet, but
there will be some exceptional cases where it produces two.
(Nonetheless, we will show that at most one such ``bad'' converter is used during the {RPR} process.)
The construction of these converters is somewhat intricate, and is deferred to the next section.


\myparagraph{The base case}
We now specify the set of base concentration values and their mixing graphs.
Let $\calB=\braced{\half,\onefourth,\threefourths,\threeeighths,\fiveeighths,\fivesixteenths,\elevensixteenths}$.
(Concentrations $\fivesixteenths$ and $\elevensixteenths$ are not strictly necessary
for correctness but are included in the base case to improve the waste bound.)
Figure~\ref{fig: mixing graphs for the base cases} illustrates the mixing graphs for 
concentrations $\half$, $\onefourth$, $\threeeighths$, and $\fivesixteenths$;
the mixing graphs for the remaining concentrations are symmetric.

\begin{figure}[ht]
	\begin{center}
		\includegraphics[width = 3.5in]{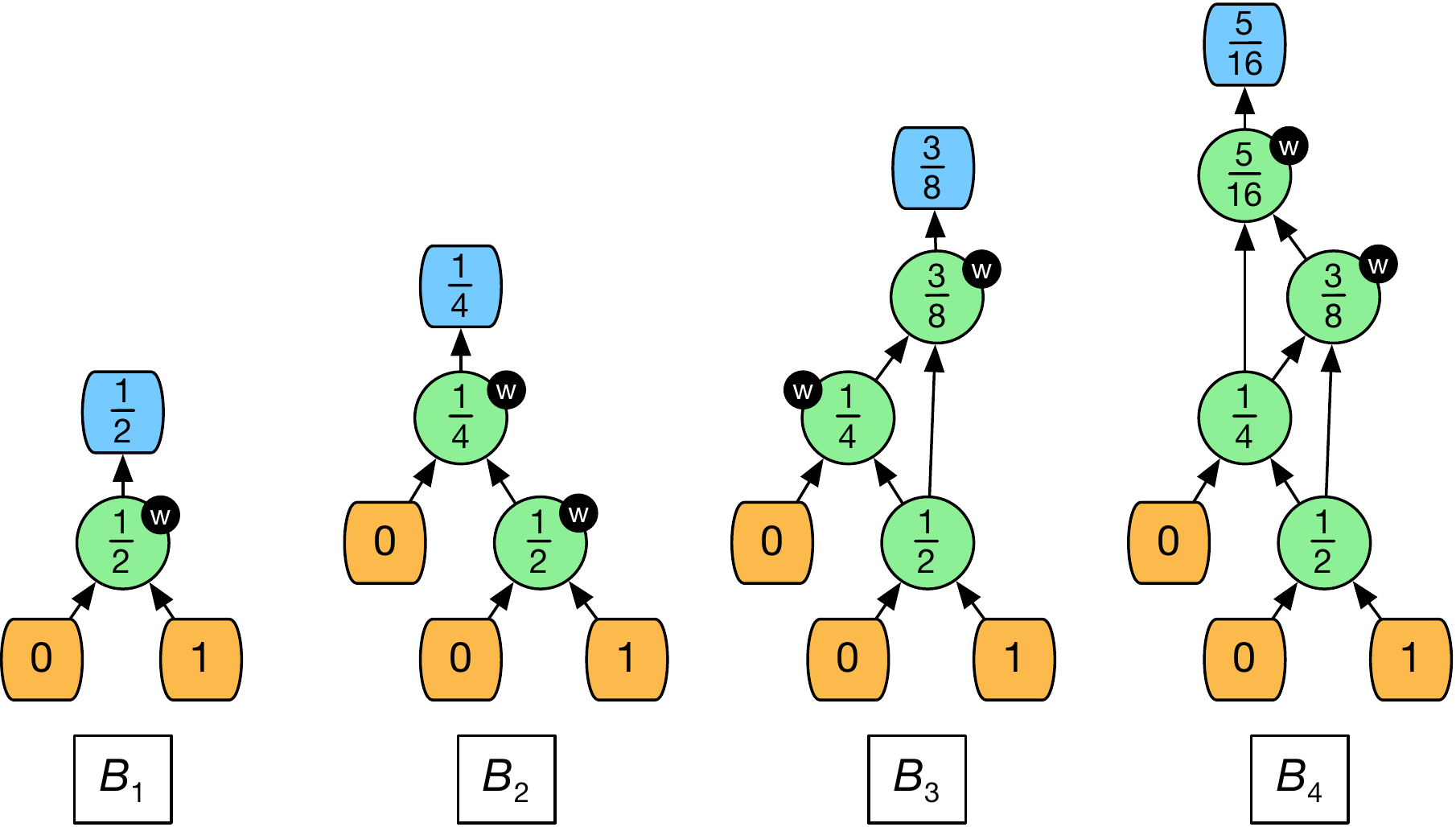}
		\caption{Base mixing graphs $B_1,B_2,B_3$ and $B_4$ for concentrations 
			$\half,\onefourth,\threeeighths$ and $\fivesixteenths$, respectively.}
		\label{fig: mixing graphs for the base cases}
	\end{center}
\end{figure}
%


\myparagraph{Initial Shift (IS)}
We now describe the IS procedure. At the fundamental level, the idea is similar
to a single step of {RPR}, although the involved linear mappings and the converter
are significantly different.

We can assume that $t < \onefourth$ (because for $t > \threefourths$ the process is symmetric).
Thus the binary representation of $t$ starts with $\gamma\geq 2$ fractional $0$'s.
Since $2^{\gamma-1}t\in [\onefourth,\half)$, we could use this value as the result of
the initial shift, but to improve the waste bound we refine this choice as follows:
If $2^{\gamma-1}t\in (\threeeighths,\half)$ then let $t_0 =  2^{\gamma-1}t$ and
$\sigma = 1$. Otherwise, we have $2^{\gamma-1}t\in [\onefourth,\threeeighths]$, in which
case we let $t_0 =  2^{\gamma}t$ and $\sigma = 0$.  In either case,
$t_0 = 2^{\gamma-\sigma}t\in[\onefourth,\threefourths]$ and $d_0 = d - \gamma+\sigma$.

Let $G_0$ be the mixing graph obtained by applying the {RPR} process to $t_0$.
It remains to show how to modify $G_0$ to obtain the mixing graph $G$ for $t$.     
This is analogous to the process shown in  Figure~\ref{fig: recursive mapping}. 
We first construct a mixing graph $G_0'$ that consists of the same nodes and edges as $G_0$, 
only each concentration label $c$ is replaced by $c/2^{\gamma-\sigma}$.
In particular, the label set of the source nodes in $G_0'$ will have the form
$I_0' = \braced{i:0, j: 1/2^{\gamma-\sigma}}$.
We then construct a  $(i:0, j: 1/2^{\gamma-\sigma})$-converter $C_0$ 
and couple it with $G_0'$ to obtain $G$; that is,  $G=G_0'\bullet C_0$.   
This $C_0$ is easy to construct: The $0$'s don't require any mixing, and
to produce the $j$ droplets $1/2^{\gamma-\sigma}$ we start
with one droplet $1$ and repeatedly mix it with $0$'s, making sure to generate at most
one waste droplet at each step.
More specifically, after $z$ steps we will have $j_z$ droplets with
concentration $1/2^z$, where $j_z = \ceiling{j/2^{\gamma-\sigma-z}}$.
In step $z$, mix these $j_z$ droplets with $j_z$ $0$'s, producing
$2j_z$ droplets with concentration $1/2^{z+1}$.   
We then either have $j_{z+1} = 2j_z$, in which case there is no waste,
or  $j_{z+1} = 2j_z-1$, in which case one waste droplet $1/2^{z+1}$ is produced.
Overall, $C_0$ produces at most $\gamma-\sigma$ waste droplets.


\section{Construction of Converters}
\label{sec: construction_of_converters}



In this section we detail the construction of our converters. 
Let $t_s$ denote the concentration at the $s^{th}$ recursive step in the RPR process.
We can assume that $t_s\in [\onefourth,\half]$, because the case $t_s\in (\half,\threefourths]$
is symmetric.
Recall that for a $t_s$ in this range, in Step~(rpr1) we will chose an appropriate interval
$S_k$, for some $k\in \braced{1,2,3}$.
Let $S_k = [l,r]$ (that is, $ l = k\cdot\oneeighth$ and $r = l+\onefourth$). 
For each such $k$ and all $i,j\ge 1$
we give a construction of an $(i:l,j:r)$-converter that we will denote $C_{i,j}^k$.
Our main objective here is to design these converters so that they 
produce as little waste as possible --- ideally none.


\subsection{$(i:\onefourth, j:\onehalf)$-Converters $C_{i,j}^2$}
\label{subsec: converters for one fourths and halves}


We start with the case $k=2$, because in this case the construction is relatively simple.
We show how to construct, for all $i,j\ge 1$, our $(i:\onefourth,j:\onehalf)$-converter $C_{i,j}^2$
that produces at most one droplet of waste.   
These converters are constructed via an iterative process. We first give initial converters
$C_{i,j}^2$, for some small values of $i$ and $j$, by providing specific graphs. All other converters
are obtained from these initial converters by repeatedly coupling them with other
mixing graphs that we refer to as \emph{extenders}.  

Let $\initJ^2 =\braced{(i,j)}_{i,j \in\braced{ 1,2}}$.
The initial converters $C_{i,j}^2$ are defined for the four index pairs $(i,j) \in \initJ^2$.
Figure~\ref{fig: one-fourth_one-half} illustrates the initial converters 
$C_{2,1}^2,C_{1,2}^2$ and two extenders $X_1^2,X_2^2$.
Converter $C_{1,2}^2$ produces one waste droplet and  converter $C_{2,1}^2$ does not produce any waste.
Converter $C_{1,1}^2$ can be obtained from $C^2_{2,1}$ by designating one of the $\onefourth$ outputs as waste.
Converter $C_{2,2}^2$ is defined as $C_{2,2}^2=X_1^2\bullet C_{2,1}^2$, and produces one waste droplet of $\half$.
(Thus $C_{2,2}^2$ is simply a disjoint union of $C_{2,1}^2$ and $X_1^2$ with one output $\half$ designated as waste.)

\begin{figure}[ht]
	\begin{center} 
		\includegraphics[width = 3.8in]{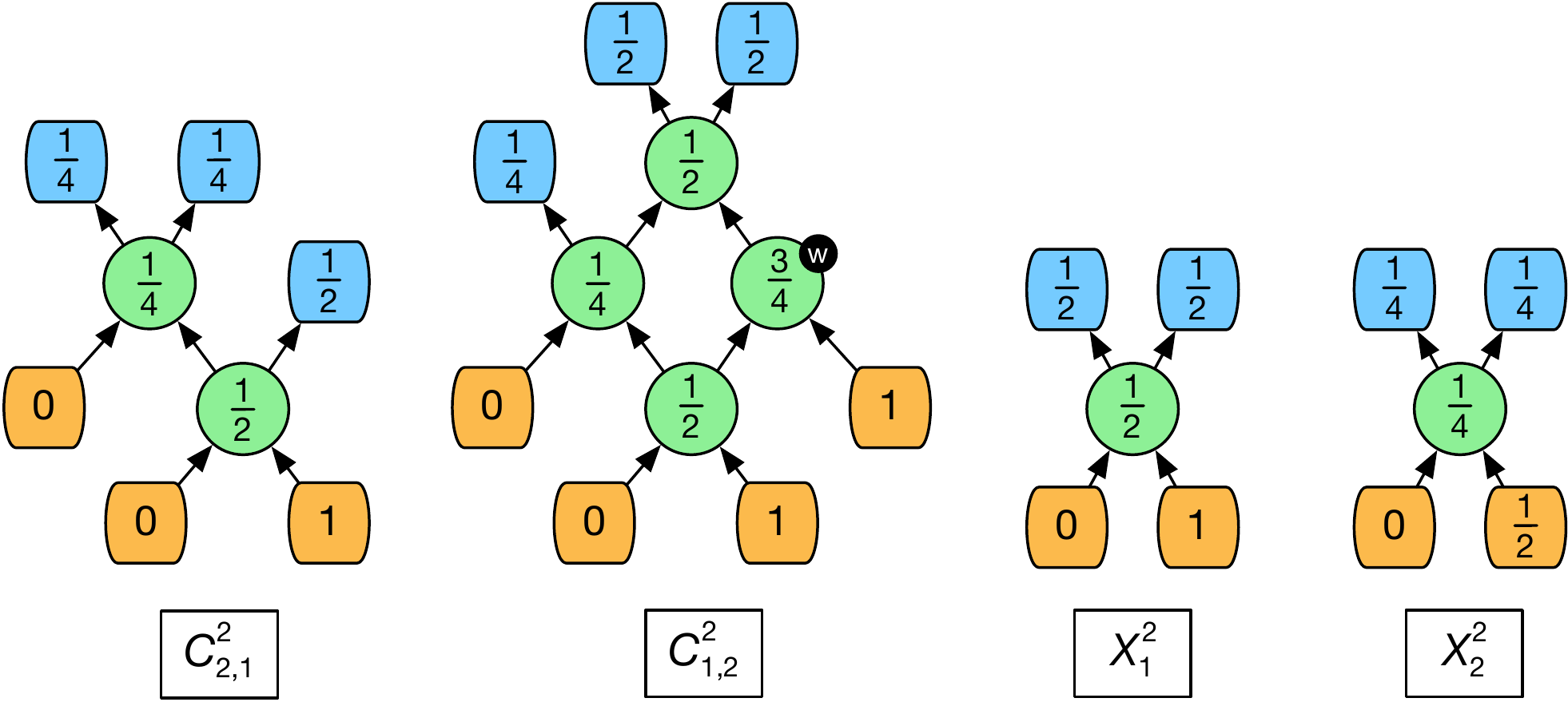}
		\caption{Initial converters and extenders for the case $I=\braced{i:\onefourth,j:\onehalf}$.}
		\label{fig: one-fourth_one-half}
	\end{center}
\end{figure}

The construction of other converters $C_{i,j}^2$ is based on the following
observation: Suppose that we already have constructed some $C_{i,j}^2$.
Then 
(i) $X_1^2\bullet C_{i,j}^2$ is a $C_{i,j + 2}^2$ converter that
produces the same waste as $C_{i,j}^2$, and
(ii) provided that $j\ge 2$,
$X_2^2\bullet C_{i,j}^2$ is a $C_{i+2, j-1}^2$ converter
that produces the same waste as $C_{i,j}^2$.

Let now $i,j\ge 1$ with $(i,j)\notin\initJ^2$ be arbitrary. 
To construct $C_{i,j}^2$, using the initial converters
and the above observation, express the integer vector
$(i,j)$ as $(i,j) = (i',j') + \phi(0,2) + \psi(2,-1)$, 
for some $i',j'\in \initJ^2$ and integers 
$\psi=\lceil \frac{i}{2}\rceil-1$ and $\phi=\lceil \frac{j+\psi}{2}\rceil-1$.
Then $C^2_{i,j}$ is constructed by starting with $C^2_{i',j'}$ and 
coupling it $\phi$ times with $X_1^2$ and then $\psi$ times with $X_2^2$. 
(This order of coupling is not unique but is also not arbitrary, 
because each extender $X_2^2$ requires a droplet of concentration $\half$ as input.)
Since $X_1^2$ and $X_2^2$ do not produce waste, $C^2_{i,j}$ will produce at most
one waste droplet.


\subsection{$(i:\threeeighths, j:\fiveeighths)$-Converters $C_{i,j}^3$}
\label{subsec: converters for three eighths and five eighths}


Next, for each pair $i,j\ge 1$ we construct an $(i:\threeeighths,j:\fiveeighths)$-converter $C_{i,j}^3$. 
These converters are designed to produce one droplet of waste. ($C^3_{1,1}$ will be an exception, see the
discussion below).
Our approach follows the scheme from Section~\ref{subsec: converters for one fourths and halves}:
we start with some initial converters, which then can be repeatedly coupled with appropriate extenders
to produce all other converters.
Since concentrations $\threeeighths$ and $\fiveeighths$ are symmetric (as $\fiveeighths = 1 - \threeeighths$),
we will only show the construction of converters $C_{i,j}^3$ for $i \ge j$;
the remaining converters can be computed using symmetric mixing graphs.

\begin{figure}[ht]
	\begin{center} 
		\includegraphics[width = 5.55in]{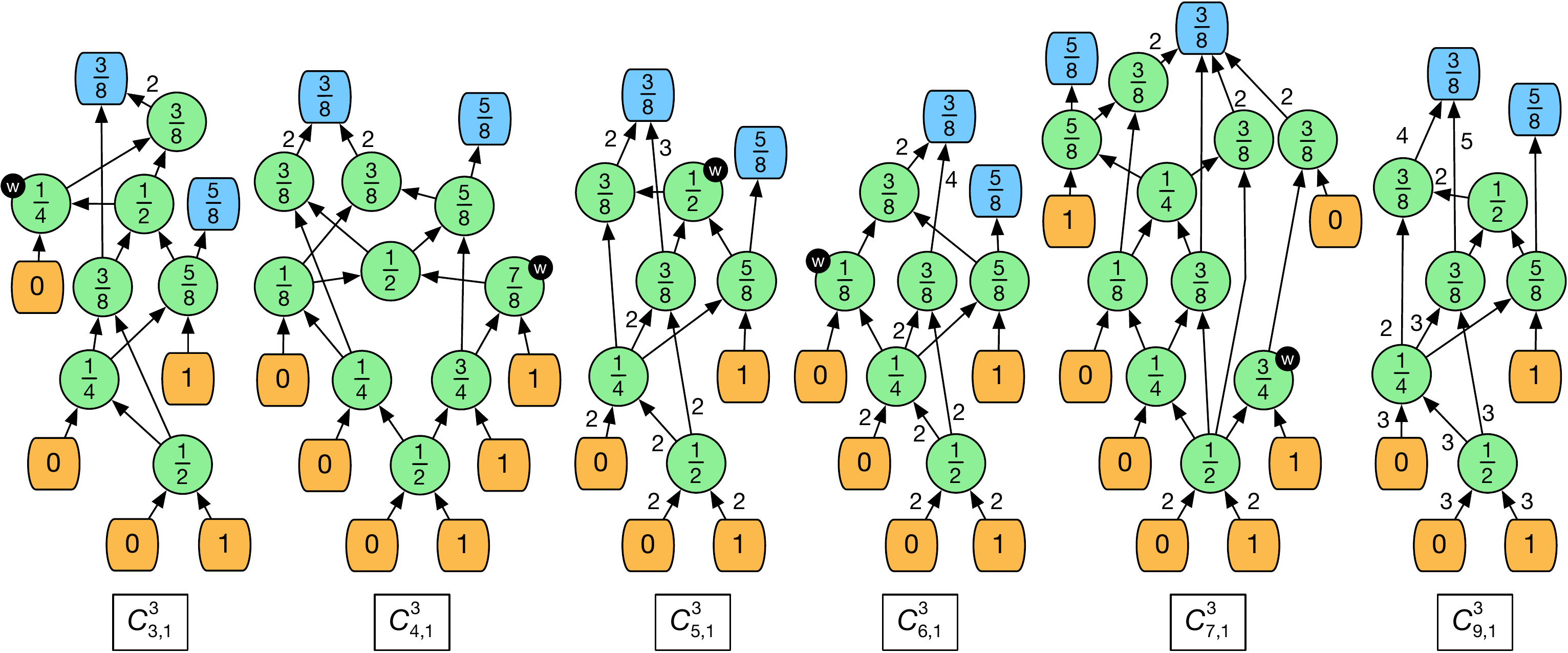}
		\caption{Initial converters for the case $I=\braced{i:\threeeighths,j:\fiveeighths}$.}
		\label{fig: three-eighths five-eighths}
	\end{center}
\end{figure}

Let $\initJ^3 =\braced{(i,1)}_{i\in\braced{1,2,...,9}}\cup \braced{(2,2)}$.
The initial converters $C_{i,j}^2$ are defined for all index pairs $(i,j) \in \initJ^3$.
Figure~\ref{fig: three-eighths five-eighths} shows converters $C_{3,1}^3,C_{4,1}^3,\dots,C_{7,1}^3$ and $C_{9,1}^3$.
Converter $C_{8,1}^1$ can be obtained from $C^3_{9,1}$ by designating an output of $\threeeighths$ as waste.
Converter $C^3_{2,2}$ is almost identical to $X_1^3$ in Figure~\ref{fig: three-eighths five-eighths extenders};
except that the source labels $\threeeighths$ and $\fiveeighths$ are replaced by $0$ and $1$, respectively
(the result of mixing is still $\half$, so other concentrations in the graph are not affected).
Converters $C_{1,1}^3$ and $C_{2,1}^3$ are obtained from $C^3_{2,2}$ by designating outputs of
$\braced{\threeeighths,\fiveeighths}$ and $\fiveeighths$, respectively, as waste.
Note that all initial converters except for $C^3_{1,1}$ produce at most one droplet of waste.

Now, consider extenders $X_1^3$ and $X_2^3$ in Figure~\ref{fig: three-eighths five-eighths extenders}.
The construction of other converters $C_{i,j}^3$ follows the next observation:
Assume that we have already constructed some $C_{i,j}^3$, with $i\ge j$.
Then 
(i) $X_1^3\bullet C_{i,j}^3$ is a $C_{i+1, j+1}^3$ converter that
produces the same waste as $C_{i,j}^3$, and
(ii) $X_2^3\bullet C_{i,j}^3$ is a $C_{i+8,j}^3$ converter
that produces the same waste as $C_{i,j}^3$.
\begin{figure}[ht]
	\begin{center} 
		\includegraphics[width = 2in]{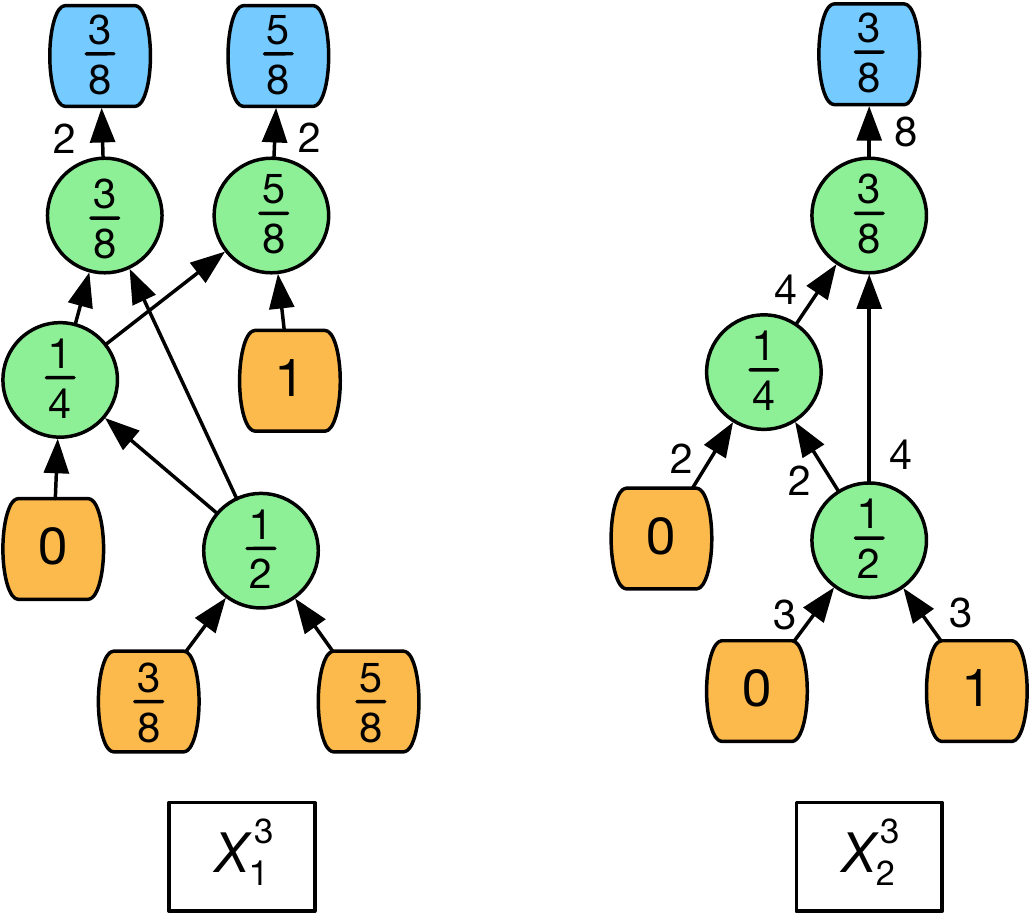}
		\caption{$X_1^3$ and $X_2^3$ extenders for the case $I=\braced{i:\threeeighths,j:\fiveeighths}$.}
		\label{fig: three-eighths five-eighths extenders}
	\end{center}
\end{figure}

Consider now arbitrary $i \ge j\ge 1$ with $(i,j)\notin\initJ^3$. To construct $C_{i,j}^3$, using the initial converters
and the above observation, express the integer vector
$(i,j)$ as $(i,j) = (i',j') + \phi(1,1) + \psi(8,0)$, 
for some integers $\phi,\psi\ge 0$, and $(i',j')\in \initJ^3-\braced{(1,1)}$.
Then $C^3_{i,j}$ is constructed by starting with $C^3_{i',j'}$ and 
coupling it  $\phi$ times with $X_1^3$ and then $\psi$ times with $X_2^3$
(in arbitrary order).
Since $X_1^3$ and $X_2^3$ do not produce waste (and we do not use the initial
converter $C^3_{1,1}$), $C^3_{i,j}$ will produce at most one waste droplet.

Overall, all converters $C^3_{i,j}$, except for $C^3_{1,1}$ produce at most one
waste droplet. Converter $C^3_{1,1}$ produces two droplets of waste; however,
as we later show in Section~\ref{subsec: performance bounds},
it is not actually used in the algorithm.


\subsection{$(i:\oneeighth, j:\threeeighths)$-Converters $C_{i,j}^1$ }
\label{subsec: converters for one eighths and three eighths}


In this section, 
for each pair $i,j\ge 1$ we construct an $(i:\oneeighth,j:\threeeighths)$-converter $C_{i,j}^1$.
Most of these converters produce at most one droplet of waste, but there will be four exceptional
coverters with waste two. (See the comments at the end of this section.)
The idea of the construction follows the same scheme as in Sections~\ref{subsec: converters for one fourths and halves}
and~\ref{subsec: converters for three eighths and five eighths}: we
start with some initial converters and repeatedly couple them with appropriate extenders to obtain other converters.

\begin{figure}[ht]
	\begin{center} 
		\includegraphics[width = 5.55in]{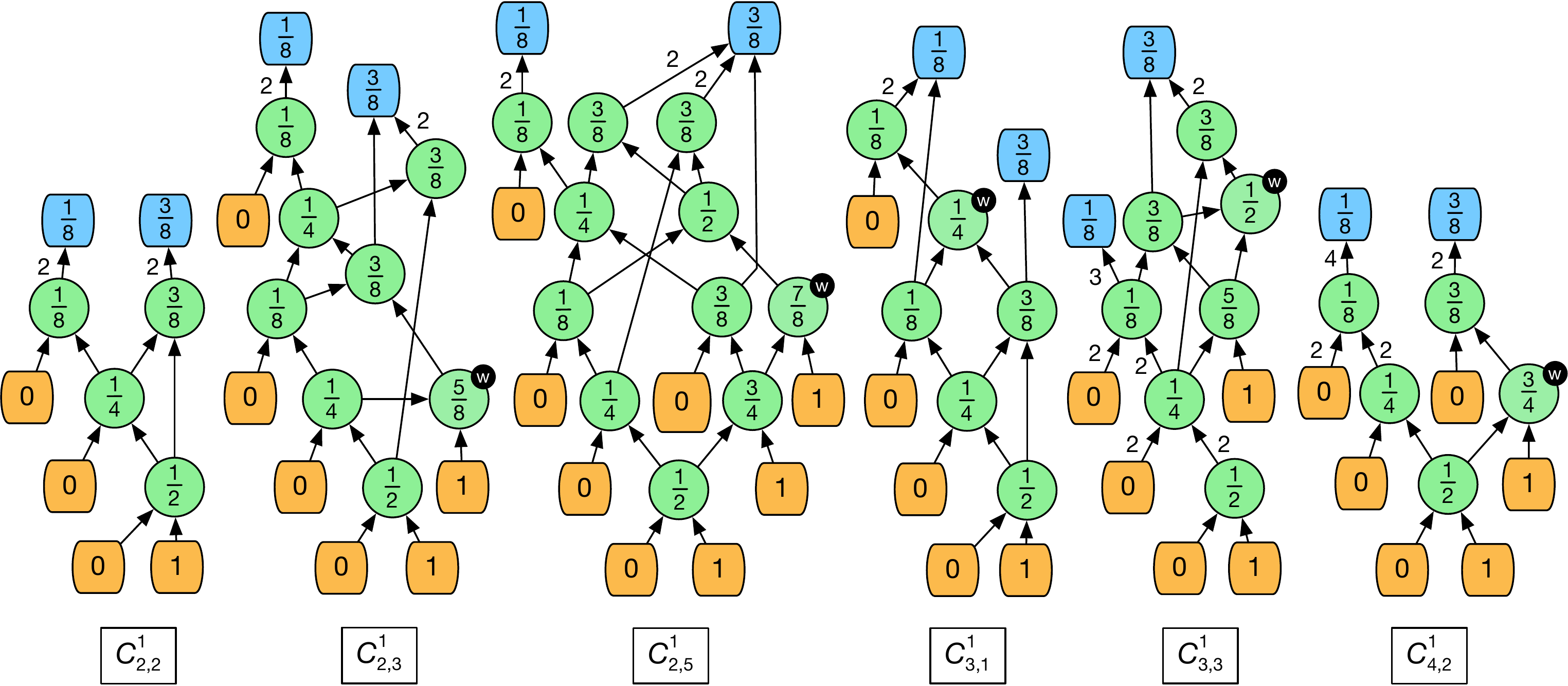}
		\caption{Initial converters for the case $I=\braced{i:\oneeighth,j:\threeeighths}$.}
		\label{fig: one-eighth_three-eighths}
	\end{center}
\end{figure}
Let $\initJ^1 =\braced{(i,j)}_{i,j\in\braced{1,2,3}} \cup \braced{(4,2),(2,5)}$.
The initial converters $C_{i,j}^1$ are defined for all index pairs $(i,j) \in \initJ^1$.
Converters $C_{2,2}^1$, $C_{2,3}^1$, $C_{2,5}^1$, $C_{3,1}^1$, $C_{3,3}^1$ and $C_{4,2}^1$
are shown in Figure~\ref{fig: one-eighth_three-eighths}.
Converters $C_{1,1}^1$, $C_{1,2}^1$ and $C_{2,1}^1$ are obtained from $C_{2,2}^1$ by designating outputs of
$\braced{\oneeighth,\threeeighths}$, $\oneeighth$ and $\threeeighths$, respectively, as waste.
Converter $C_{1,3}^1$ is obtained from $C_{2,3}^1$ by designating an output of $\oneeighth$ as waste, and
$C_{3,2}^1$ is obtained from $C_{4,2}^1$ by designating an output of $\oneeighth$ as waste. 
Thus, among the initial converters,
$C_{1,1}^1$, $C_{1,3}^1$ and $C_{3,2}^1$ each produces two droplets of waste; all other converters
have at most one droplet of waste.

Next, we provide an observation leading to the construction of other converters $C_{i,j}^1$.
Consider extenders $X_1^1$ and $X_2^1$ in Figure~\ref{fig: one-eighth_three-eighths_extenders} and
assume that we have already constructed some $C_{i,j}^1$.
Then, 
(i) provided that $j\ge 2$, $X_1^1\bullet C_{i,j}^1$ is a $C_{i+3, j-1}^1$ converter that
produces the same waste as $C_{i,j}^1$, and
(ii) provided that $i\ge 2$, $X_2^1\bullet C_{i,j}^1$ is a $C_{i-1,j+3}^1$ converter
that produces the same waste as $C_{i,j}^1$.
We also need the following, less obvious observation:


\begin{observation}\label{cla: pairs (i,j) for k=2}
If $i,j\ge 1$ and $(i,j) \notin \initJ^1\cup\braced{(6,1)}$, 
then $(i,j) = (i',j') + \phi(-1,3) + \psi(3,-1)$,
for some integers $\phi,\psi\ge 0$, and $(i',j')\in \initJ^1-\braced{(1,1),(1,3),(3,2)}$.
\end{observation}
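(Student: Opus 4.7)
The plan is to recognize the statement as a lattice coverage problem over $\integers^2$. Let $L \subset \integers^2$ be the integer lattice generated by the vectors $v_1 = (-1,3)$ and $v_2 = (3,-1)$. Since $\det(v_1,v_2) = (-1)(-1) - (3)(3) = -8$, the lattice $L$ has index $8$ in $\integers^2$. Moreover, the identities $(-1) + 3\cdot 3 = 8$ and $3 + 3\cdot(-1) = 0$ show that $L = \braced{(a,b)\in\integers^2 \suchthat a+3b \equiv 0 \pmod 8}$, so the eight cosets of $L$ in $\integers^2$ are parameterized by the value of $i+3j \pmod 8$.

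First I would tabulate $i'+3j' \pmod 8$ for each of the eight allowed base points in $\calB := \initJ^1 \setminus \braced{(1,1),(1,3),(3,2)} = \braced{(1,2),(2,1),(2,2),(2,3),(3,1),(3,3),(4,2),(2,5)}$, obtaining the residues $7, 5, 0, 3, 6, 4, 2, 1$ respectively. Since every residue modulo $8$ appears exactly once, every $(i,j) \in \integers^2$ admits a unique representative $(i',j') \in \calB$ in the same coset of $L$.

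With this representative fixed, the equation $(i-i', j-j') = \phi(-1,3) + \psi(3,-1)$ is uniquely solved by $\phi = ((i-i') + 3(j-j'))/8$ and $\psi = (3(i-i') + (j-j'))/8$. Both are integers since $(i,j)$ and $(i',j')$ lie in a common coset, so the existence of a decomposition with nonnegative coefficients reduces to the two linear inequalities $i+3j \ge i'+3j'$ and $3i+j \ge 3i'+j'$.

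The remaining work is a finite case analysis, one residue class at a time, identifying which $(i,j)$ with $i,j\ge 1$ violate one of these inequalities. The main obstacle is to handle the two ``large'' base points $(2,5)$ (where $i'+3j'=17$) and $(4,2)$ (where $3i'+j'=14$), which impose the tightest constraints. For the coset of $(2,5)$, the only pairs with $i,j \ge 1$ and $i+3j<17$ are $(3,2)$ and $(6,1)$; the former lies in $\initJ^1$ and the latter is precisely the sole extra exception $\braced{(6,1)}$ claimed by the statement. For the coset of $(4,2)$, a similar enumeration shows that the only failing pair is $(1,3) \in \initJ^1$. A brief check for each of the remaining six cosets confirms that every offending $(i,j)$ is already in $\initJ^1$, exhausting the list of exceptions and completing the proof.
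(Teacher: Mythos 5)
Your proof is correct, but it takes a genuinely different route from the paper's. The paper first asserts (without proof, as an easily checked fact) that every $(i,j)$ with $i,j\ge 1$ can be written over the $3\times 3$ grid $\braced{1,2,3}^2$ with nonnegative coefficients, and then repairs the three bad base points by a local exchange: sign constraints forced by $i,j\ge 1$ and $(i,j)\notin\initJ^1\cup\braced{(6,1)}$ guarantee that one copy of $(-1,3)$ and/or $(3,-1)$ can be absorbed into the base point, turning $(1,1)$ into $(3,3)$, $(1,3)$ into $(4,2)$, and $(3,2)$ into $(2,5)$ (the last via a short contradiction argument). You instead exploit the lattice structure directly: the index-$8$ lattice generated by $(-1,3)$ and $(3,-1)$ has cosets indexed by $i+3j \bmod 8$, the eight admissible base points hit each coset exactly once, so the representative and the coefficients $\phi,\psi$ are \emph{uniquely} determined, and the whole claim reduces to the two linear inequalities $i+3j\ge i'+3j'$ and $3i+j\ge 3i'+j'$ plus a finite enumeration of failures per coset. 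Your approach buys uniqueness, explicit closed-form coefficients, and a transparent explanation of why $(4,2)$, $(2,5)$, and the exceptional pair $(6,1)$ appear at all; the paper's is shorter on the page but rests on an unproved initial representation claim. One small gap in your sketch: for the coset of $(2,5)$ you enumerate only the pairs violating $i+3j\ge 17$ and never address the second inequality $3i+j\ge 11$ for that coset (it happens that no pair there violates it, but this must be checked, e.g.\ by noting $3i+j\equiv 3\pmod 8$ forces $3i+j\ge 11$); likewise the ``brief check'' of the remaining six cosets is asserted rather than carried out. These are routine verifications and all your stated outcomes are correct, so the argument is sound once they are written down.
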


\begin{proof}
Let $i,j\ge 1$ and $(i,j) \notin \initJ^1\cup\braced{(6,1)}$.
We note first that we can represent $(i,j)$ as
$(i,j) = (\tildei,\tildej) + \tildephi(-1,3) + \tildepsi(3,-1)$,
for $(\tildei,\tildej) \in \initJ^1-\braced{(2,5),(4,2)}$ and
integers $\tildephi,\tildepsi \geq 0$.
If $(\tildei,\tildej) \notin \braced{(1,1),(1,3),(3,2)}$ then we are done.
Otherwise, we show how to modify the values of parameters $\tildei$, $\tildej$,
$\tildephi$ and $\tildepsi$ so that they satisfy the condition in the observation.

\smallskip\noindent
\mycase{1} {$(\tildei,\tildej) = (1,1)$}. 
		For this case, $\tildephi,\tildepsi\geq 1$ must hold, as otherwise we would
		get a contradiction with  $i,j\geq 1$.
		Therefore, we can write $(i,j)$ as
		$(i,j) = (3,3) + (\tildephi-1)(-1,3) + (\tildepsi-1)(3,-1)$.

\smallskip\noindent		
\mycase{2} {$(\tildei,\tildej) = (1,3)$}.
		For this case, $\tildepsi\geq 1$ must hold, because $i \geq 1$.
		Therefore, we can write $(i,j)$ as
		$(i,j) = (4,2) + \tildephi(-1,3) + (\tildepsi-1)(3,-1)$.	

\smallskip\noindent		
\mycase{3}{$(\tildei,\tildej) = (3,2)$}.
		For this case, it is sufficient to prove that $\tildephi\geq 1$, 
		since we could then write $(i,j)$
		as $(i,j) = (2,5) + (\tildephi-1)(-1,3) + \tildepsi(3,-1)$.
		To show that $\tildephi\geq 1$ we argue by contradiction, as follows.
		Suppose that $\tildephi=0$. 
		Then $(i,j) = (3,2) + \tildepsi(3,-1)$.
		For $\tildepsi\in\braced{0,1}$
		this contradicts that $(i,j) \notin \initJ^1\cup\braced{(6,1)}$,
		and for $\tildepsi\ge 2$ it contradicts that $j\ge 1$.
\end{proof}


%
\begin{figure}[ht]
	\begin{center} 
		\includegraphics[width = 2.3in]{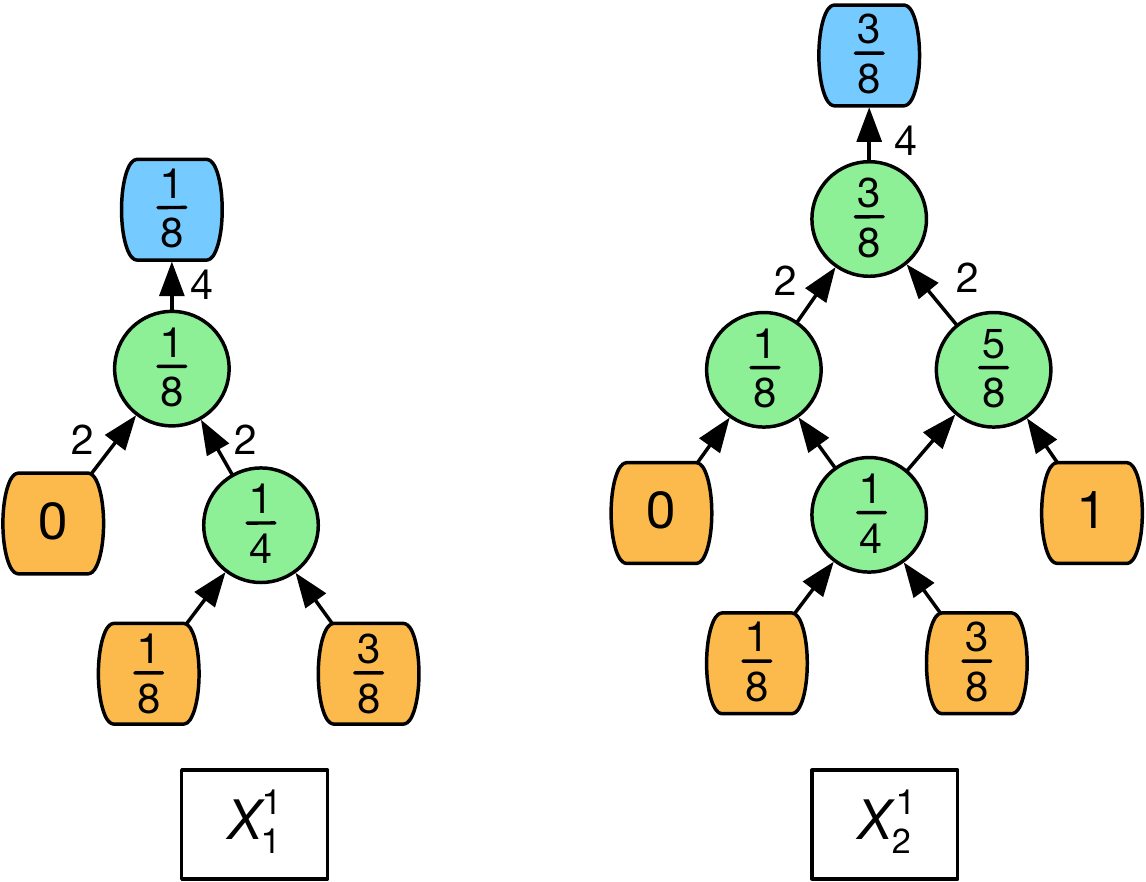}
		\caption{$X_1^1$ and $X_2^1$ extenders for the case $I=\braced{i:\oneeighth,j:\threeeighths}$.}
		\label{fig: one-eighth_three-eighths_extenders}
	\end{center}
\end{figure}

Using the observations above, for any pairs $i,j\ge 1$ we can construct converter $C^1_{i,j}$ as follows.
If $(i,j) = (6,1)$ we let $C^1_{6,1}=X_1^1\bullet C^1_{3,2}$ (so $C^1_{6,1}$ has two droplets of waste).
If $(i,j) \neq (6,1)$, we construct $C^1_{i,j}$ by starting with $C^1_{i',j'}$ and repeatedly 
coupling it with $\phi$ copies of $X_2^1$ and $\psi$ copies of $X_1^1$, choosing a suitable order of
couplings to ensure that each intermediate
converter has at least one output $\oneeighth$ and at least one $\threeeighths$.
(For example, if $j'=1$ then we begin by coupling $X_2^1$ first.)
As $X_1^1$ and $X_2^1$ do not produce any waste, these $C^1_{i,j}$'s will each produce at most one droplet of waste.

Overall, the converters $C^1_{i,j}$ we construct have at most one droplet of waste, with the exception of the
following four: $C^1_{1,1}$, $C^1_{1,3}$, $C^1_{3,2}$ and $C^1_{1,6}$. (It is easy to prove that for these
converters waste $2$ cannot be avoided.)
As we show later in Section~\ref{subsec: performance bounds}, of these four converters only $C^1_{1,3}$ is actually
used in the RPR process of Algorithm~$\RPRIS$, and it is used at most once.


\section{Performance Bounds}
\label{subsec: performance bounds}



In this section we provide the analysis of Algorithm~$\RPRIS$, including the
worst-case bound on produced waste, a bound on the size of computed mixing graphs, and
the running time.


\myparagraph{Bound on waste}
We first estimate the number of waste droplets of Algorithm~$\RPRIS$. 
Let $G$ be the mixing graph constructed by $\RPRIS$ for
a target concentration $t$ with its corresponding values $d=\precision(t)$
and $\gamma$ (as defined in Section~\ref{sec: introduction}).
Below we prove the following theorem.

\begin{theorem}\label{thm: waste bound}
The number of waste droplets in $G$ is at most $\half(d+\gamma)+2$.
\end{theorem}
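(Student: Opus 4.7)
The plan is to decompose the waste of $G$ as $W(G) = W_{\mathrm{IS}} + W_{\mathrm{RPR}} + W_{\mathrm{base}}$, corresponding to the three sources of waste in Algorithm~$\RPRIS$: the Initial Shift converter $C_0$ (used only when $\gamma \geq 2$), the sequence of converters $C_{s+1}$ built at each RPR step (rpr3), and the base mixing graph for the terminal target $t_B \in \calB$. This decomposition is valid because the relabeling $G_{s+1}\mapsto G_{s+1}'$ in (rpr3) is purely combinatorial and introduces no new waste; all waste is incurred either at a converter or in the base graph.

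I would bound each term. For $W_{\mathrm{IS}}$, the construction in Section~\ref{sec: algorithm description} produces at most $\gamma - \sigma$ waste droplets and leaves RPR to start with precision $d_0 = d - \gamma + \sigma$; when $\gamma \leq 1$, IS is skipped, so $W_{\mathrm{IS}} = 0$ and $d_0 = d$. For $W_{\mathrm{RPR}}$, let $B$ be the number of conversion steps executed during RPR; since each step cuts precision by exactly $2$ and the recursion halts the first time $t_s \in \calB$ (for which $d_s \in \{1,2,3,4\}$), $B = (d_0 - d_B)/2$ where $d_B$ is the precision of the terminal base value. Each converter built in (rpr3) produces at most one waste droplet, with the exception of the ``bad'' converters identified in Sections~\ref{subsec: converters for three eighths and five eighths} and~\ref{subsec: converters for one eighths and three eighths}. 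The key structural sub-claim is that across the entire RPR recursion at most one bad converter appears, and it must be $C^1_{1,3}$; granting this, $W_{\mathrm{RPR}} \leq B + 1$. For $W_{\mathrm{base}}$, direct inspection of the seven base mixing graphs for values in $\calB$ establishes $W_{\mathrm{base}} \leq d_B/2 + 1$ (tight for $\onefourth$ and $\fivesixteenths$), which is why $\fivesixteenths$ and $\elevensixteenths$ are included in $\calB$.

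Combining these bounds, when $\gamma \geq 2$,
\[
W(G) \;\leq\; (\gamma - \sigma) + (B+1) + W_{\mathrm{base}} \;\leq\; (\gamma - \sigma) + \tfrac{d_0 - d_B}{2} + 1 + \tfrac{d_B}{2} + 1 \;=\; \tfrac{d + \gamma}{2} - \tfrac{\sigma}{2} + 2 \;\leq\; \tfrac{d+\gamma}{2} + 2,
\]
and when $\gamma \leq 1$ (so IS is skipped and $d_0 = d$), the same calculation yields $W(G) \leq d/2 + 2 \leq \half(d+\gamma) + 2$. The main obstacle is the sub-claim on bad converters. I would handle $C^3_{1,1}$ by noting that its use would require $t_{s+1} = \half \in \calB$, but $\half$ is already a base case, so step (rpr1) never produces it as output. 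For $C^1_{1,1}$, $C^1_{3,2}$, and $C^1_{1,6}$, I would verify that each forces a predecessor $t_s \notin [\onefourth, \threefourths]$, contradicting the invariant maintained by RPR. Finally, for $C^1_{1,3}$, I would argue that the specific source-label multiplicity structure in $G_{s+1}$ required to trigger it cannot occur twice, by tracking the parities of the source-label counts $(i,j)$ across successive RPR steps and showing that after an occurrence the structure of subsequent $G_{s'+1}$ forbids a repetition.
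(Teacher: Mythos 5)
Your decomposition $W(G)=W_{\mathrm{IS}}+W_{\mathrm{RPR}}+W_{\mathrm{base}}$ and the accompanying arithmetic are correct and essentially identical to the paper's accounting (the paper phrases it as counting sink nodes: at most $\gamma-\sigma$ from $C_0$, at most $3$ from the base graph, one per converter with a single two-waste exception, and $\half(d_0-d_b)$ converters). The problem is that you have relegated the entire substance of the proof to the ``key structural sub-claim,'' and your sketch of that sub-claim contains steps that do not work. First, your reason for excluding $C^k_{1,1}$ --- ``$\half$ is already a base case, so step (rpr1) never produces it as output'' --- is not a valid inference: step (rpr1) can perfectly well output a value of $\calB$ (e.g.\ $t_s=\frac{11}{32}$ with $S_2$ gives $t_{s+1}=\threeeighths\in\calB$). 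The correct argument, which the paper gives, is specific to $\half$: if $t_{s+1}=\half$ then $t_s=l+\oneeighth\in\braced{\onefourth,\threeeighths,\half,\fiveeighths,\threefourths}\subset\calB$, so the algorithm would already have used a base graph at step $s$.

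Second, and more seriously, your proposed exclusion of $C^1_{3,2}$ (``forces a predecessor $t_s\notin[\onefourth,\threefourths]$'') fails outright. Use of any $C^1_{i,j}$ forces only $t_s\in(\onefourth,\fivesixteenths]$, which is comfortably inside the invariant interval, and a graph with sources $\braced{3:0,2:1}$ can produce concentrations above $\half$ (e.g.\ $\threefourths$), so no mass-balance or invariant contradiction is available. The paper's Claim~\ref{cla: unused converters} has to look one level deeper: it examines the three possible intervals $S_3,S_4,S_5$ for $t_{s+1}\in(\half,\threefourths]$ and shows that none of them can yield a recursive construction whose source configuration is $\braced{3:0,2:1}$. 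That case analysis is the technical core of the theorem and is absent from your plan. (A reactant-availability argument does work for $C^1_{6,1}$, since a single $1$-source cannot yield any mixed droplet above $\half$, but that is the conclusion $t_{s+1}\le\half$, not $t_s\notin[\onefourth,\threefourths]$.) Finally, your ``parity tracking'' idea for showing $C^1_{1,3}$ occurs at most once is only a gesture; the paper's argument is a clean monotonicity statement --- the number of source nodes is non-increasing as the recursion proceeds toward the base case, while $C^1_{1,3}$ has six sources but only four non-waste sinks, so a second occurrence would violate monotonicity. Until these three points are actually proved (together with their symmetric counterparts for $S_4$ and $S_5$), the bound $W_{\mathrm{RPR}}\le B+1$ is unsupported.
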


To prove Theorem~\ref{thm: waste bound}, we show that the total
number of sink nodes in $G$ is at most $\half(d+\gamma-\sigma)+3$,
for corresponding $\sigma\in\braced{0,1}$.
(This is sufficient, as one sink node is used to produce $t$).
 
Following the algorithm description in Section~\ref{sec: algorithm description}, let $G=G_0'\bullet C_0$.
From our construction of $C_0$ (at the end of Section~\ref{sec: algorithm description}),
we get that $C_0$ contributes at most $\gamma-\sigma$ sink nodes to $G$.
(Each waste droplet produced by $C_0$ represents a sink node in $G$.)
Therefore, to prove Theorem~\ref{thm: waste bound} it remains to show 
that $G_0'$ contains at most $\half(d-\gamma+\sigma) + 3$ sink nodes.
This is equivalent to showing that $G_0$, computed by process {RPR} for $t_0$ 
(and used to compute $G_0'$),
contains at most $\half d_0 + 3$ sink nodes, where $d_0 = \precision(t_0) = d - \gamma + \sigma$.
Lemma~\ref{lem: sink nodes in hatG'} next proves this claim.


\begin{lemma}\label{lem: sink nodes in hatG'}
The number of sink nodes in $G_0$ is at most $\half d_0 + 3$.
\end{lemma}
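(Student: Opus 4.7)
I would argue by induction on the RPR recursion depth $S$ (the number of recursive reductions $t_0 \to t_1 \to \cdots \to t_S$, with $t_S \in \calB$). Writing $N_s$ for the number of sinks of $G_s$ and $w_{s+1}$ for the waste of $C_{s+1}$, the coupling identity $G_s = G_{s+1}' \bullet C_{s+1}$ preserves the sinks of $G_{s+1}'$ (same count as $G_{s+1}$) and adds exactly the waste droplets of $C_{s+1}$ as new sinks, giving the telescoping
\[
    N_s \;=\; N_{s+1} + w_{s+1}
    \qquad\text{and hence}\qquad
    N_0 \;=\; N_S \,+\, \sum_{s=1}^{S} w_s.
\]
Since $d_0 = d_S + 2S$, the desired bound $N_0 \le \half d_0 + 3$ is equivalent to
\[
    \bigl(N_S - \half d_S\bigr) \;+\; \sum_{s=1}^{S}(w_s - 1) \;\le\; 3.
\]

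For the first term (the ``base slack''), I would inspect each of the seven base mixing graphs in Figure~\ref{fig: mixing graphs for the base cases} together with the three symmetric counterparts, obtaining $N_S = 2, 3, 3, 4, 4, 3, 3$ for $t_S = \half, \onefourth, \threefourths, \threeeighths, \fiveeighths, \fivesixteenths, \elevensixteenths$ respectively. The slack is therefore at most $5/2$, and this maximum is attained only for $t_S \in \braced{\threeeighths, \fiveeighths}$. For the second term, the converter constructions of Sections~\ref{subsec: converters for one fourths and halves}--\ref{subsec: converters for one eighths and three eighths} already guarantee $w_s \le 1$ except for the five ``bad'' converters $C^3_{1,1}, C^1_{1,1}, C^1_{1,3}, C^1_{3,2}, C^1_{1,6}$, each of which produces two waste droplets. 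I would then prove the three combinatorial claims foreshadowed in Sections~\ref{subsec: converters for three eighths and five eighths} and~\ref{subsec: converters for one eighths and three eighths}: (i) $C^3_{1,1}, C^1_{1,1}, C^1_{3,2}, C^1_{1,6}$ never arise in any RPR run; (ii) $C^1_{1,3}$ arises at most once per run; and (iii) if the run terminates at $t_S \in \braced{\threeeighths, \fiveeighths}$, then $C^1_{1,3}$ does not arise either.

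Granted (i)--(iii), the excess-waste sum $\sum(w_s - 1)$ is bounded by $1$ in general (a single $C^1_{1,3}$ contributing $+1$, all other summands nonpositive) and by $0$ whenever the base is $\threeeighths$ or $\fiveeighths$. Combining with the base slack then closes the lemma: when $t_S \notin \braced{\threeeighths, \fiveeighths}$ the slack is $\le 2$ and the excess $\le 1$, summing to $\le 3$; when $t_S \in \braced{\threeeighths, \fiveeighths}$ the slack is $5/2$ and the excess $\le 0$, again summing to $\le 5/2 \le 3$.

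The main obstacle is establishing (i)--(iii). Because the subscript $(i,j)$ of $C_{s+1}$ equals the multiplicities of $0$- and $1$-sources in $G_{s+1}$, these claims are exactly restrictions on the source-multiplicity profile of intermediate RPR graphs. Such profiles propagate bottom-up from $G_S$: a base graph's input profile (which I would catalog once and for all) evolves through each extender coupling by a known integer-vector shift specified in Sections~\ref{subsec: converters for one fourths and halves}--\ref{subsec: converters for one eighths and three eighths}. A finite case analysis on the base, combined with parity considerations on these shift vectors and the fact that each $t_s$ lies strictly in the middle of its chosen $S_k$-interval, should rule out the forbidden patterns $(1,1)$ (for both $k=1$ and $k=3$), $(3,2)$, and $(1,6)$, limit $(1,3)$ to at most one occurrence, and exclude $(1,3)$ whenever the recursion terminates at $\threeeighths$ or $\fiveeighths$.
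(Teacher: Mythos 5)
Your skeleton matches the paper's: both decompose the sink count of $G_0$ into (base-graph sinks) plus (one sink per converter, with a correction for converters that waste two droplets), and both use the fact that the number of converters is $\tfrac{1}{2}(d_0-d_b)$ because each RPR step drops the precision by exactly $2$. Your telescoping/slack reformulation is a clean way to package the same accounting. However, there is a genuine gap: everything hinges on your claims (i) and (ii) -- that $C^k_{1,1}$, $C^1_{3,2}$, $C^1_{6,1}$ (and symmetric versions) never arise, and that $C^1_{1,3}$ arises at most once -- and you do not prove them; you only assert that ``a finite case analysis on the base, combined with parity considerations on these shift vectors'' should work. That is precisely the hard part of the lemma, and the mechanism the paper actually uses is different from what you sketch: $C^k_{1,1}$ is excluded because its use would force the next target to be $\half\in\calB$, so the algorithm would already have stopped one level earlier; $C^1_{6,1}$ and $C^1_{3,2}$ are excluded by reactant-availability contradictions (the resulting source profile, e.g.\ $\braced{6{:}0,1{:}1}$ or $\braced{3{:}0,2{:}1}$, cannot supply enough reactant for the next target, which necessarily lies in $(\half,\threefourths]$); and the ``at most once'' claim for $C^1_{1,3}$ follows from monotonicity of the source-node count across recursive levels, not from parity of shift vectors. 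Without these arguments your excess-waste bound $\sum(w_s-1)\le 1$ is unsupported.

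A secondary discrepancy: you assign $N_S=4$ sinks to the base graphs for $\threeeighths$ and $\fiveeighths$, whereas the paper's base graphs all have at most $3$ sinks (e.g.\ for $\threeeighths$: inputs $\braced{2{:}0,1{:}1}$, three mixers, sinks $\braced{\onefourth, 2{:}\threeeighths}$). Your inflated count forces you to invent the extra claim (iii) (that $C^1_{1,3}$ cannot occur when the recursion bottoms out at $\threeeighths$ or $\fiveeighths$), which the paper neither needs nor proves and whose truth is not evident. With the correct base counts, the slack is at most $2$, the excess is at most $1$, and claim (iii) can be dropped entirely.
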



\begin{proof}
Let $t_b$ be the concentration used for the base case of the RPR process and 
$d_b=\precision(t_b)\leq d_0$ its precision. 
We prove the lemma in three steps.
First, we show that (i) the number of sink nodes in the mixing graph 
computed for $t_b$ is at most three. (In particular, this gives us that the lemma holds if $t_0 = t_b$.)
Then, we show that (ii) 
if $t_0\neq t_b$ then the number of converters used in the construction of $G_0$ is no more than
$\half d_0 - 1$, and (iii) that at most one of such converter contains two waste sink nodes.
All sink nodes of $G_0$ are either in its base-case graph or in its converters, so
combining claims (i), (ii) and~(iii) gives a complete proof for Lemma~\ref{lem: sink nodes in hatG'}.

The proof of (i) is by straightforward inspection. By definition of the base case,
$t_b\in\calB= \braced{\half,\onefourth,\threefourths,\threeeighths,\fiveeighths,\fivesixteenths,\elevensixteenths}$.
The mixing graphs for base concentrations are shown in Figure~\ref{fig: mixing graphs for the base cases}.
(The graphs for $\threefourths$, $\fiveeighths$, and $\elevensixteenths$ are symmetric to $B_2$,
$B_3$, and $B_4$.) All these graphs have at most $3$ sink nodes.

Next, we prove part~(ii). 
In each step of the RPR process we reduce the precision of the target concentration
by $2$ until we reach the base case, which gives us that 
the number of converters is exactly $\half(d_0-d_b)$.
It is thus sufficient to show that  $d_b\ge 2$, as this immediately implies~(ii).
Indeed, the assumption that $t_0\neq t_b$ and the definition of the base
case implies that $d_0\ge 4$. (This is because the algorithm maintains the 
invariant that its target concentration is in $[\onefourth,\threefourths]$ and all
concentrations in this interval with precision at most $3$ are in $\calB$.)
This, and the precision of the target concentration decreasing by exactly $2$ in each step of the recursion, 
imply that $d_b \in\braced{2,3}$ holds.

We now address part~(iii).
First we observe that converters $C_{1,1}^k$ are not used in the construction of $G_0$:
If we did use $C_{1,1}^k$ in the construction of $G_0$ then
the source labels for the next recursive step are $\braced{0,1}$. Hence, $t_b=\half$.
Now, let $t_{b-1}$ be the concentration, and $S_k=[l,r]$ the interval, used to compute $t_b$.
Since $t_b=\half$, then $t_{b-1}=\half(l+r)$. Therefore, by definition of $S_k$, 
$t_{b-1}\in \braced{\onefourth,\threeeighths,\half,\fiveeighths,\threefourths}\subset \calB$,
so Algorithm~{\RPRIS} would actually use a base case mixing graph for $t_{b-1}$, 
instead of constructing $C_{1,1}^k$ for $t_b$.

So, it is sufficient to consider $C_{i,j}^k$ converters that satisfy $i+j\geq 3$ with $i,j\geq 1$.
Now, from Sections~\ref{subsec: converters for one fourths and halves},
\ref{subsec: converters for three eighths and five eighths} 
and~\ref{subsec: converters for one eighths and three eighths},
we observe that the only such converters that contain two waste sink nodes are
$C_{1,3}^1,C_{3,2}^1$ and $C_{6,1}^1$. Claim~\ref{cla: unused converters} below shows that
converters $C_{6,1}^1$ and $C_{3,2}^1$ are not used in the construction of $G_0$.

Regarding $C_{1,3}^1$, first we note that this converter has
exactly six source nodes; see Figure~\ref{fig: one-eighth_three-eighths}, 
Section~\ref{subsec: converters for one eighths and three eighths}.
This implies that $C_{1,3}^1$ can not be used more than once in the construction of $G_0$,
since the number of source nodes at each recursive step in the RPR process is decreasing.
(Note that there are symmetric converters $C_{3,1}^5$, $C_{2,3}^5$ and $C_{1,6}^5$
for $C_{1,3}^1$, $C_{3,2}^1$ and $C_{6,1}^1$, respectively,
where superscript $5$ is associated to interval $S_5$.
Nevertheless, a similar argument holds.) Thus, step~(iii) holds.

\begin{claim}\label{cla: unused converters}
	Converters $C_{6,1}^1$ and $C_{3,2}^1$ are not used by Algorithm~$\RPRIS$
	in the construction of $G_0$ for $t_0$.
\end{claim}

	We first present the following observations.
	Consider recursive step $s$ of the RPR process, for which
	$t_s$ is the target concentration.
	If a converter $C_{i,j}^1$ is used in this step, 
	then $t_s\in(\onefourth,\fivesixteenths]$ must hold; 
	that is $t_s$ is in the middle part of interval $S_1$ 
	(see Figure~\ref{fig: intervals graphical representation} in
	Section~\ref{sec: algorithm description}).
	(Recall that, by our algorithm's invariant, $t_s\in[\onefourth,\threefourths]$.
	Also, note that $t_s\neq\onefourth$ since otherwise this would be a base case
	and the algorithm would use $B_2$ from Figure~\ref{fig: mixing graphs for the base cases} instead.)
	Further, at the next step of the RPR process, 
	$t_{s+1}=4(t_s-\oneeighth)$ satisfies $t_{s+1}\in(\half, \threefourths]$.
	
	We now prove the claim by contradiction, using the above observations.
	Assume that either $C_{6,1}^1$ or $C_{3,2}^1$ were used in the construction of $G_0$.
	If $C_{6,1}^1$ was used in the construction of $G_0$, 
	then the concentration labels of the source nodes at the next recursive step are $\braced{6:0,1}$, 
	and thus, since $t_{s+1}>\half$, there is not enough reactant available to produce $t_{s+1}$.

	On the other hand, if $C_{3,2}^1$ was used in the construction of $G_0$, then
	the concentration labels of the source nodes at the next 
	recursive step are $\braced{3:0,2:1}$.
	This implies that the next step is guaranteed not to be a base case,
	since all mixing graphs used for base case concentrations contain at most three source nodes,
	as illustrated in Figure~\ref{fig: mixing graphs for the base cases}.
	Now, as $t_{s+1}>\half$, depending on the exact value of $t_{s+1}$,
	the chosen interval for $t_{s+1}$ must be either 
	$S_3=[\threeeighths,\fiveeighths]$,	$S_4=[\half,\threefourths]$ or $S_5=[\fiveeighths,\seveneighths]$.
	We now consider these three cases.

\smallskip
\noindent
\mycase{1} $t_{s+1} \in (\half,\ninesixteenths]$. Then the chosen interval is
		$S_3=[\threeeighths,\fiveeighths]$.
		The only $C_{i,j}^3$ converter with source concentration labels $\braced{3:0,2:1}$ 
		is $C_{3,1}^3$ (see in Figure~\ref{fig: three-eighths five-eighths}
		in Section~\ref{subsec: converters for three eighths and five eighths}),
		whose sink nodes have concentration labels $\braced{\onefourth,3:\threeeighths,\fiveeighths}$.
		Therefore, the input configuration for the next recursive step will be a subset of $\braced{3:0,1}$, 
		which does not have enough reactant to produce $4(t_{s+1}-\threeeighths)>\half$,
		thus contradicting the choice of $S_3$.
		
\smallskip
\noindent
\mycase{2} $t_{s+1} \in (\ninesixteenths,\elevensixteenths]$. Then the chosen interval is   
		$S_4=[\half,\threefourths]$.
		This instance is symmetric to interval $S_2$, having source concentration labels 
		$\braced{2:0,3:1}$, instead of $\braced{3:0,2:1}$, and target concentration $t_{s+1}'=(1-t_{s+1})$.
		Thus we proceed accordingly.
		Since every converter and extender in Section~\ref{subsec: converters for one fourths and halves}
		adds at least the same number of source nodes with concentration label $0$
		as source nodes with concentration label $1$,
		then no converter constructed by the algorithm will have
		source concentration labels $\braced{2:0,3:1}$.
		Hence, we have a contradiction with
		the choice of $S_2$ for $t_{s+1}'$, and thus also with the choice of $S_4$ for $t_{s+1}$.

\smallskip
\noindent		
\mycase{3} $t_{s+1} \in (\elevensixteenths,\threefourths]$. Then the chosen interval is 
		$S_5=[\fiveeighths,\seveneighths]$.
		The argument here is simple: to produce concentration $\seveneighths$,
		at least three reactant droplets are needed, but the input configuration contains only two.
		Therefore, at the next recursive step, the algorithm will not have enough reactant droplets 
		to construct a converter $C^5_{i,j}$ with $i,j\geq 1$,
		contradicting the choice of $S_5$ for $t_{s+1}$.

\medskip

	Finally, neither $S_3,S_4$ nor $S_5$ are chosen by our algorithm for $t_{s+1}$,
	contradicting $C_{3,2}^1$ being used for the construction of $G_0$.
	
	This completes the proof of Claim~\ref{cla: unused converters}
	and Lemma~\ref{lem: sink nodes in hatG'} (thus also completing the proof of
	Theorem~\ref{thm: waste bound}).
\end{proof}


\myparagraph{Size of mixing graphs and running time}
Let $G=G_0'\bullet C_0$ be the mixing graph computed by Algorithm~$\RPRIS$ for $t$;
$C_0$ is constructed by process IS while
$G_0'$ is obtained from $G_0$ (constructed by process RPR) by changing concentration labels appropriately.
We claim that the running time of Algorithm~$\RPRIS$ is $O(|G|)$,
and that the size of $G$ is $O(d^2)$, for $d=\precision(t)$.
We give bounds for $G_0$ and $C_0$ individually, then we combine them to obtain the claimed bounds.
(This is sufficient because the size of $G_0'$, as well as the running time to construct it, 
is asymptotically the same as that for $G_0$.)

First, following the description of process RPR in Section~\ref{sec: algorithm description},
suppose that at recursive step $s$, $G_{s+1}$, $G_{s+1}'$ and converter $C_{s+1} = C^k_{i,j}$ are computed.
(Note that the algorithm does not need to explicitly relabel $G_{s+1}$ to get 
$G_{s+1}'$ -- we only distinguish $G_{s+1}$ from $G_{s+1}'$ for the purpose of presentation.)
The size of $C^k_{i,j}$ is $O(i+j)$ and it takes time $O(i+j)$ to assemble it (as the
number of required extenders is $O(i+j)$). Coupling $C_{s+1}$ with $G_{s+1}'$ also takes time
$O(i+j)$, since $I_{s+1}'$ (the input configuration for $G_{s+1}'$) has cardinality $O(i+j)$ as well. 
In other words, the running time of each recursive RPR step
is proportional to the number of added nodes. Thus the overall running time to construct $G_0$ is $O(|G_0|)$.

Now, let $t_0$ be the target concentration for the RPR process, with $d_0=\precision(t_0)$.
Then, the size of $G_0$ is $O(d_0^2)$. This is because the depth
of recursion in the RPR process is $O(d_0)$, and each converter used in this process
has size $O(d_0)$ as well. The reason for this bound on the converter size is that,
from a level of recursion to the next, the number of source nodes 
increases by at most one (with an exception of at most one step, as explained earlier in this section),
and the size of a converter $C^k_{i,j}$ used at this level is asymptotically the same as the 
number of source nodes at this level.
($I_s$ and $I_{s+1}'$ in Figure~\ref{fig: recursive mapping} illustrate the idea.)

Regarding the bounds for $C_0$, we first argue that the running time to construct $C_0$ is $O(|C_0|)$.
This follows from the construction given in Section~\ref{sec: algorithm description};
in step $s$ there are $2j_s$ droplets being mixed, which requires $j_s$ nodes;
thus the entire step takes time $O(j_s)$.

We next show that the size of $C_0$ is $O(d^2)$. Let $I_0$ be the input configuration for $G_0$.
From the analysis for $G_0$, we get that $|I_0|=O(d_0)$, so the last step in $C_0$ contains $O(d_0)$ nodes.
Therefore, as the depth of $C_0$ is $\gamma-\sigma$, the size of $C_0$ is $O(\gamma d_0)=O(d_0^2)$.

Combining the bounds from $G_0$ and $C_0$, we get that the
running time of Algorithm~{\RPRIS} is $O(|G|)$ and the size of $G$ is $O(d^2)$. 
(The coupling of $C_0$ with $G_0'$ does not affect the overall running time,
since it takes $O(d_0)$ time to couple them, as $|I_0|=O(d_0)$.)


\section{Experimental Study}
\label{subsec: experimental study}


In this section we compare the performance of our algorithm
with algorithms $\MinMix$, $\REMIA$, $\DMRW$, $\GORMA$ and $\ILP$.
We start with brief descriptions of these algorithms, to give the reader some
intuitions behind different approaches for constructing mixing graphs.
Let $t\in(0,1)$ be the target concentration and $d=\precision(t)$ its precision.
Also, let $\binrep(t)$ be $t$'s binary representation with no trailing zeros.

\begin{description}
	\item[\MinMix~\cite{thies2008abstraction}:] 
	This algorithm is very simple. It starts with $\tau = 0$ and
	mixes it with the bits of $\binrep(t)$ in reverse order, ending with $\tau = t$.
	It runs in time $O(d)$ and produces $d$ droplets of waste.
	
	\item[\REMIA~\cite{huang2012reactant}:]
	This algorithm is based on two phases.
	In the first phase, the algorithm computes a mixing graph $G'$ whose source nodes
	have concentration labels that have exactly one bit $1$ in their binary representation;
	each such concentration represents each of the $1$ bits in $\binrep(t)$.
	Then, in the second phase, a mixing graph $G''$ (that minimizes reactant usage), 
	whose sink nodes are basically a superset of the source nodes in $G'$, is computed.
	Finally, $G$ for $t$ is obtained as $G'\bullet G''$.  
	(Although  $\REMIA$ targets reactant usage, its comparison to different algorithms
	in terms of total waste was also reported in~\cite{huang2012reactant}.    
	Thus, for the sake of completeness, we included $\REMIA$ in our study.)
	
	\item[\DMRW~\cite{roy2010optimization}:]
	This algorithm is based on binary search. Starting with pivot values $l=0$ and $r=1$,
	the algorithm repeatedly ``mixes'' $l$ and $r$ and resets one of them to their average $\half(l+r)$,
	maintaining the invariant that $t\in [l,r]$. After $d$ steps we end up with $l = r = t$. Then the algorithm 
	gradually backtracks to determine, for each intermediate pivot value, how many times this value was used 
	in mixing, and based on this information it computes the required number of droplets. This information
	is then converted into a mixing graph.

	\item[\GORMA~\cite{chiang2013graph}:]
	This algorithm enumerates the mixing graphs for a given target concentration.
	An initial mixing graph is constructed in a top-down manner;
	starting from the target concentration $t$ (the root node), 
	the algorithm computes two concentrations $x$ and $y$
	(called a preceding pair) such that $t = \half(x+y)$ and both $x$ and $y$ have smaller precision than $t$;
	$x$ and $y$ become $t$'s children and both $x$ and $y$ are then processed recursively.
	(Note that a concentration might have many distinct preceding pairs. Each preceding pair is processed.)
	A droplet sharing process is then applied to every enumerated mixing graph to decrease reactant usage and waste produced.
	A branch-and-bound approach is adopted to ease its exponential running time.

	\item[\ILP~\cite{dinh2014network}:]
	This algorithm constructs a ``universal'' mixing graph that contains all
	mixing graphs of depth $d$ as subgraphs.
	It then formulates the problem of computing a mixing graph minimizing
	waste as an integer linear program (a restricted flow problem), and
	solves this program. This universal graph has size exponential in $d$,
	and thus the overall running time is doubly exponential in $d$.
	
\end{description}

We now present the results of our experiments.
Each experiment consisted on generating all concentration values with precision $d$, 
for $d\in\braced{7,8,15,20}$, and comparing the outputs of each of the algorithms.
The results for $\GORMA$ and $\ILP$ are shown only for $d\in\braced{7,8}$, since for 
$d\in \braced{15,20}$ the running time of both $\GORMA$ and $\ILP$ is prohibitive.

\begin{figure}[ht]
	\begin{center}
		\includegraphics[width = 4.5in]{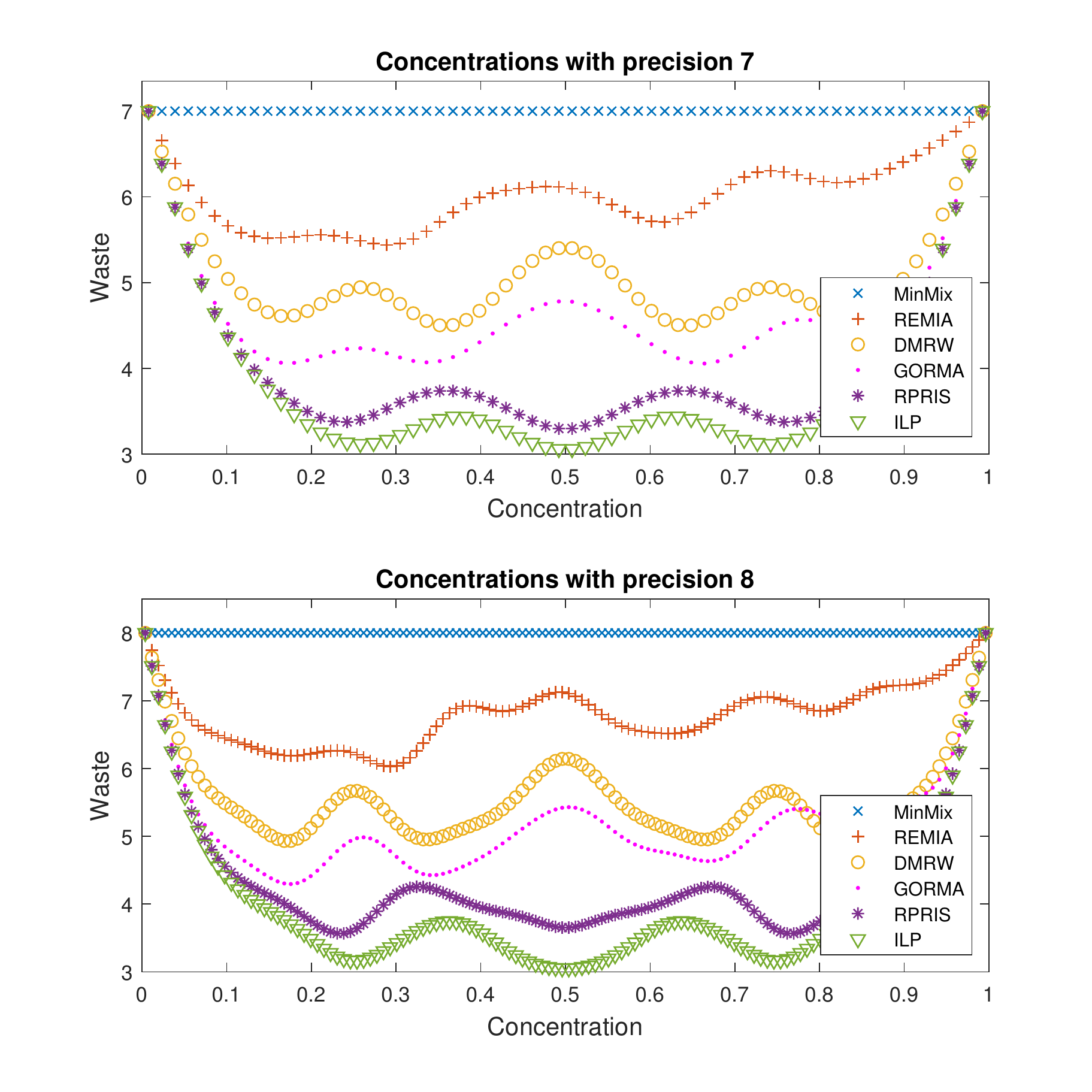}
		\caption{The number of waste droplets of algorithms $\MinMix$, $\REMIA$, 
			$\DMRW$, $\GORMA$, $\ILP$, and our algorithm $\RPRIS$,
		for all concentrations with precision $7$ (top figure)
		and $8$ (bottom figure). All graphs are smoothed using $\MATLAB$'s \emph{smooth} function.
		}
		\label{fig: experiments for precisions 7 and 8.}
	\end{center}
\end{figure}

Figure~\ref{fig: experiments for precisions 7 and 8.} illustrates
the experiments for concentrations of precision $7$ and $8$.
Figure~\ref{fig: experiments for precisions 15 and 20.} illustrates
the experiments for concentrations of precision $15$ and $20$.
In both figures, the data was smoothed using $\MATLAB$'s \emph{smooth} function to reduce
clutter and to bring out the differences in performance between different algorithms.   

As can be seen from these graphs, $\RPRIS$ significantly outperforms algorithm $\MinMix$, $\REMIA$, $\DMRW$ and $\GORMA$:

It produces on average about $50\%$ less waste than $\MinMix$ 
(consistently with our bound of $\half(d+\gamma)+4$ on waste produced by $\RPRIS$),
and $40\%$ less waste than $\REMIA$. It also produces on average
between $21$ and $25\%$ less waste than $\DMRW$, with this percentage increasing with $d$.
Additionally, for $d=7,8$, $\RPRIS$ produces on average about $17\%$ less waste than $\GORMA$ 
and only about $7\%$ additional waste than $\ILP$.

\begin{figure}[ht]
	\begin{center}
		\includegraphics[width = 4.5in]{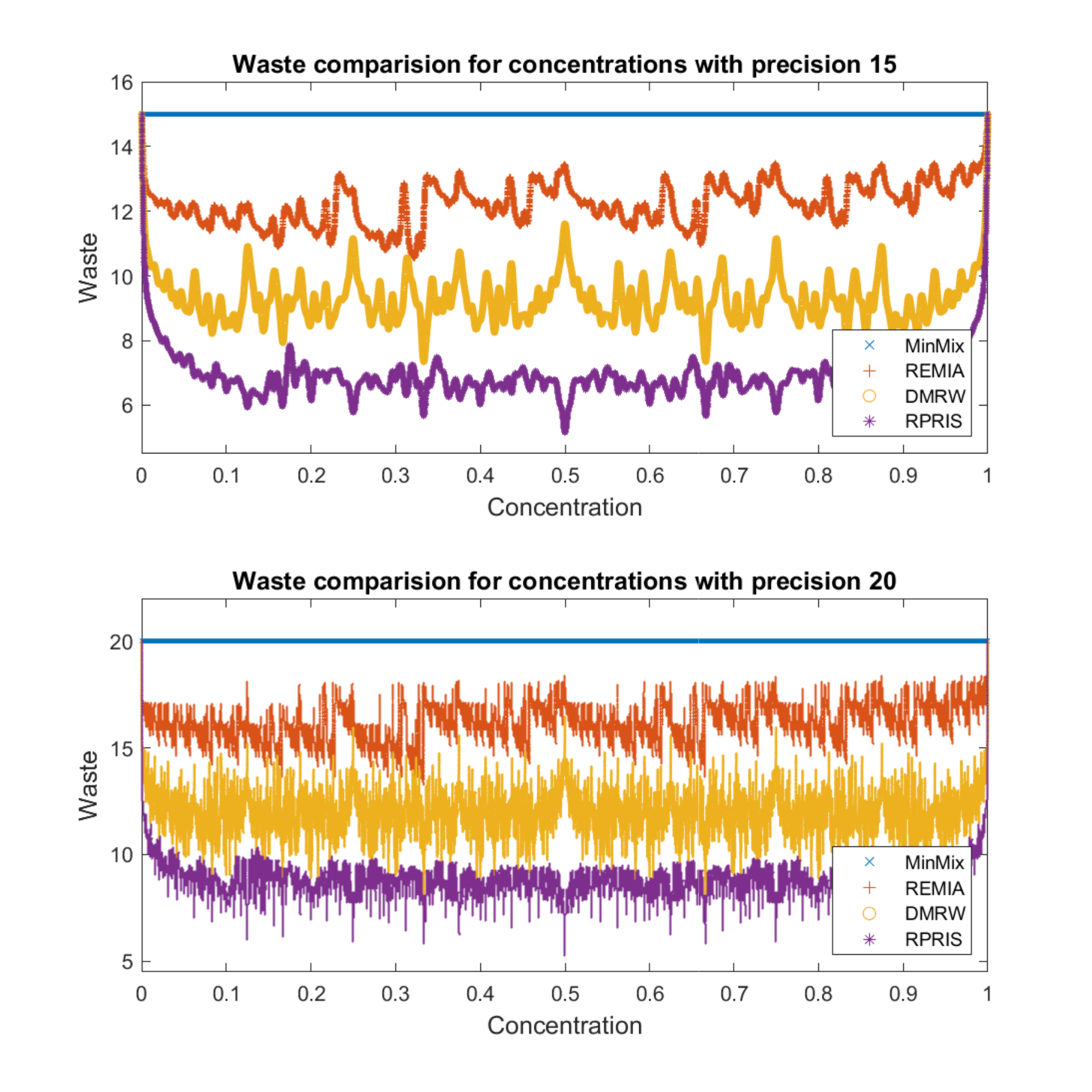}
		\caption{The number of waste droplets of algorithms $\MinMix$,  $\DMRW$,
		          $\REMIA$, and our algorithm $\RPRIS$,
		for all concentrations with precision $15$ (top figure)
		and $20$ (bottom figure). All graphs are smoothed using $\MATLAB$'s \emph{smooth} function.
		}
		\label{fig: experiments for precisions 15 and 20.}
	\end{center}
\end{figure}

Among all of the target concentration values used in our experiments, there is not a single case where $\RPRIS$ 
is worse than either $\MinMix$ or $\REMIA$. 
When compared to $\DMRW$, $\RPRIS$ never produces more waste for precision $7$ and $8$.
For precision $15$, the percentage of concentrations where $\RPRIS$ produces more waste
than $\DMRW$ is below $2\%$, and for precision $20$ it is below $3.5\%$.
Finally, when compared to $\GORMA$, the percentage of concentrations where $\RPRIS$ produces
more waste is below $4\%$.


\section{Final Comments}
\label{subsec: final comments}

In this paper we proposed Algorithm~$\RPRIS$ for single-droplet targets, and
we showed that it outperforms standard waste minimization algorithms $\MinMix$ and $\DMRW$
in experimental comparison. We also proved that its worst-case bound on
waste is also significantly better than for the other two algorithms.

Many questions about mixing graphs remain open. 
We suspect that our bound on waste can be significantly improved.
It is not clear whether waste linear in $d$ is needed for concentrations not too close to $0$ or $1$,
say in $[\onefourth,\threefourths]$.    
In fact, we are not aware of even a \emph{super-constant} (in terms of $d$) lower bound on waste
for concentrations in this range.

For single-droplet targets
it is not known whether minimum-waste mixing graphs can be effectively computed.    
The most fascinating open question, in our view, is whether it is decidable to
determine if a given multiple-droplet target set can be produced without any waste. (As mentioned in 
Section~\ref{sec: introduction}, the ILP-based algorithm from~\cite{dinh2014network} 
does not always produce an optimum solution.)

Another interesting problem is about designing mixing graphs for producing multiple
droplets of the same concentration. Using perfect-mixing graphs from~\cite{gonzalez2019towards},
it is not difficult to prove that if the number of droplets exceeds a certain threshold
then such target sets can be produced with at most one waste droplet.
However, this threshold value is very large and the resulting algorithm very complicated.
As such target sets are of practical significance, a simple algorithm with good
performance would be of interest.     

It would also be interesting to extend our proposed worst-case performance
measure to reactant minimization. It is quite possible that our general approach
of recursive precision reduction could be adapted to this problem.


\bibliographystyle{plainurl}

\end{document}